\newcolumntype{L}[1]{>{\raggedright\arraybackslash}p{#1}}
\newcolumntype{C}[1]{>{\centering\arraybackslash}m{#1}}
\newcolumntype{R}[1]{>{\raggedleft\arraybackslash}p{#1}}
\newcommand{\E}[1]{\mathbb{E}\left[{#1}\right]}
\renewcommand{\epsilon}{\varepsilon}
\newcommand{\Expand}{\mathsf{Expand}}
\newcommand{\MExp}{\MC{M}_{\mathsf{Exp}}}
\newcommand{\vbl}[1]{\mathsf{vbl}\left(#1\right)}
\newcommand{\mgn}[1]{\tilde{{#1}}}
\newcommand{\MC}[1]{\mathfrak{{#1}}}
\newcommand{\Resample}{\textsf{Local-Resample}}
\newcommand{\GenResample}{\textsf{GenResample}}
\newcommand{\LRes}{\mathsf{Res}}
\newcommand{\GRes}{\mathsf{GR}}
\newcommand{\MLRes}{\MC{M}_{\mathsf{Res}}}
\newcommand{\HLRes}{H}
\newcommand{\HHC}{H_{\mathsf{HC}}}
\newcommand{\concept}[1]{\emph{{#1}}}
\newcommand{\todo}[1]{\typeout{TODO: \the\inputlineno: #1}\textbf{{\color{red}[[[ #1 ]]]}}}
\newtheorem{theorem}{Theorem}[section]
\newtheorem{claim}[theorem]{Claim}
\newtheorem*{claim*}{Claim}
\newtheorem{condition}{Condition}
\newtheorem{lemma}[theorem]{Lemma}
\theoremstyle{definition}
\newtheorem{definition}{Definition}[section]
\newtheorem{remark}[theorem]{Remark}
\newtheorem*{remark*}{Remark}
\def\blfootnote{\xdef\@thefnmark{}\@footnotetext}
\title{\bf Dynamic Sampling from Graphical Models\blfootnote{This research is supported by the National Key R\&D Program of China 2018YFB1003202 and the National Science Foundation of China under Grant Nos. 61722207 and 61672275.}
}
\author{
Weiming Feng\thanks{
Nanjing University. Emails: {fengwm@smail.nju.edu.cn}, {yinyt@nju.edu.cn}.}
\and
Nisheeth K. Vishnoi\thanks{\'{E}cole Polytechnique F\'{e}d\'{e}rale de Lausanne (EPFL). Email: nisheeth.vishnoi@gmail.com.}
\and
Yitong Yin\footnotemark[1]~\footnote{Part of the work was done when Yitong Yin was visiting the Bernoulli Center at EPFL.}
}
\date{}
\begin{document}
\maketitle
\begin{abstract}
In this paper, we study the problem of sampling from a graphical model  when the model itself is changing dynamically with time.
This problem derives its interest from a variety of  inference, learning, and sampling settings in machine learning, computer vision, statistical physics, and theoretical computer science. 
While the problem of  sampling from a static graphical model has received considerable attention, theoretical works for its dynamic variants 
have been largely lacking.
The main contribution of this paper is an algorithm that can   sample dynamically from a broad class of graphical models over discrete random variables.
Our algorithm is parallel and Las Vegas: it knows when to stop and it outputs samples from the exact distribution.
We also provide sufficient conditions under which this algorithm runs in time proportional to the size of the update,
on general graphical models as well as well-studied specific spin systems.
In particular we obtain, for the Ising model (ferromagnetic or anti-ferromagnetic) and for the hardcore model the first dynamic sampling algorithms that can handle both edge and vertex updates (addition, deletion, change of functions), 
both efficient within regimes that are close to the respective uniqueness regimes, beyond which, even for the static and approximate sampling, no local algorithms were known or the problem itself is intractable.
Our dynamic sampling algorithm relies on a  local resampling algorithm and  a  new ``equilibrium'' property that is shown to be satisfied by our algorithm at each step, and enables us to prove its correctness.
This equilibrium property is robust enough to guarantee the correctness of our algorithm, 
helps us improve bounds on fast convergence on specific models, and should be of independent interest.
\end{abstract}

\setcounter{page}{0} \thispagestyle{empty} \vfill
\pagebreak

\tableofcontents
\setcounter{page}{0} \thispagestyle{empty} \vfill
\pagebreak

\section{Introduction}

Graphical models arise in a variety of disciplines ranging from statistical physics, machine learning, statistics, to theoretical computer science.
A graphical model is composed of a variable set $V$ and a constraint set $E$. 
We consider graphical models on variables with finite support.
In this setting, each variable $v \in V$ is associated to a distribution $\phi_v$ over the set $[q]=\{1,2,\ldots,q\}$.
Each constraint $e \in E$ is a subset of variables and comes with a function $\phi_e:[q]^{e} \to [0,1]$ defined on the variables in $e$. 
Together, these induce a  probability distribution $\mu$ over all possible assignments  $\sigma \in [q]^V$ as follows:
$$ \mu(\sigma) \propto \prod_{v \in V} \phi_v(\sigma_v) \prod_{e \in E} \phi_e(\sigma_e),$$
where $\sigma_v$ (respectively $\sigma_e$) corresponds to the restriction of $\sigma$ on $v$ (respectively $e$).
This distribution is often refered to as the Gibbs distribution.
Such graphical models  can capture probability distributions over exponentially sized domains in a succinct manner.
The computational problems that arise from the application of graphical models in practice include sampling from the probability distribution they encode, computing marginals (inference), and learning a graphical model; see the books by
 \cite{koller2009probabilistic, mezard2009information, wainwright2008graphical}.
These problems often turn out to be computationally hard in the worst case and there is a wide range of methods  geared towards solving these problems approximately: 
 Markov chain Monte Carlo (MCMC) methods~\cite{levin2017markov}, correlation decay~\cite{weitz2006counting}, belief propagation~\cite{yedidia2005constructing}, and continuous optimization~\cite{straszak2017real}.

We focus on the problem of sampling from a graphical model and, in particular, when the model itself is changing {\em dynamically} with time. 
For instance, at each time, one or more of the functions $\phi_v$ or $\phi_e$ could change.
Formally, the computational question that we study is:

\begin{center}
{\em Can we obtain a sample from an updated graphical model with a small incremental cost?}
\end{center}
\noindent
This problem captures various settings in computer vision, statistical physics, and machine learning. 
In computer vision, discrete-valued graphical models are used to represent images and the problem of denoising an image boils down to sampling from such a graphical model. 
Thus, algorithms that can sample dynamically with a small incremental cost are useful in denoising videos, which can be thought of as a sequence of closely related images; see \cite{bishop2006PRML}. 
As another instance of this question, consider the setting where one uses an optimization algorithm such as stochastic gradient descent (or expectation maximization), to learn a graphical model~\cite{jordan1998learning}. 
Here, the gradient step updates the parameters of the model locally and samples from the updated distribution are used to compute the new gradient.
In theoretical computer science, the result of Jerrum, Sinclair, and Vigoda~\cite{jerrum2004polynomial} to compute the permanent of a non-negative matrix can also be viewed in this framework: their algorithm starts from a bipartite graph where it is easy to sample a perfect matching and, in each step the bipartite graph is updated.
While the problem of  sampling from static graphical models has received considerable attention, theoretical works for its dynamic variant that work with general graphical models have been largely lacking.
Indeed, since a local update may potentially change significantly the probability space encoded by the graphical model, 
in general it is unclear whether there should even exist such an algorithmic machinery that can transform with small cost a sample from the original probability space to a new sample from the updated probability space.

In this paper we show there exists such an algorithmic machinery for dynamic sampling from graphical models.
The main contribution of this paper is an algorithm that allows us to sample from a broad class of graphical models dynamically.
We allow updates (addition, deletion, and changes to the functions) to the variables and constraints.
Given a sample from the current graphical model, our sampling algorithm outputs a sample from the updated graphical model.
In addition, the algorithm is parallel and Las Vegas; it knows when to stop and it outputs samples from the {\em exact} distribution.
We also provide sufficient conditions under which the algorithm runs in time proportional to the size of the update.
This gives the first dynamic sampling algorithm that can handle both variable and constraint updates (addition, deletion, change of functions) for various graphical models.
In particular, for the Ising model with inverse temperature $\beta$ (ferromagnetic or anti-ferromagnetic) and bounded maximum degree $\Delta$ under the condition $\mathrm{e}^{-2 |\beta|} \ge 1- \frac{1}{2.222\Delta+1}$, and for the hardcore model with fugacity $\lambda$ and bounded maximum degree $\Delta$  under the condition $\lambda\le\frac{1}{\sqrt{2}\Delta -1}$, 
we obtain dynamic sampling algorithms that upon each update of an edge or a vertex can draw a new sample within $O(1)$ incremental cost.
Meanwhile, in the ``non-uniqueness regimes'' for these models where respectively $\mathrm{e}^{-2 |\beta|} < 1- \frac{2}{\Delta}$ and $\lambda>\frac{\mathrm{e}}{\Delta-2}$, even for static and approximate sampling, either there is no local algorithm or the problem itself is intractbale.

Our dynamic sampling algorithm uses the idea of ``resampling'': 
Given a starting sample, once the graphical model changes, the part of the sample that is no longer valid is resampled (potentially multiple times).
The idea of resampling was crucial to the Moser-Tardos algorithm for constructing a satisfying solution to the Lov\'asz Local Lemma  (LLL)~\cite{moser2010constructive,haeupler2011new, Harris2013The, harvey2015algorithmic} and the algorithms for sampling uniformly distributed satisfying solution to the LLL~\cite{guo2016uniform, wilson1996generating}.
Prior to our work, it was unknown whether there is such a local resampling rule that can even generate the {\em correct distribution} for general graphical models.
One of our main conceptual contributions is to come up with an ``equilibrium'' property and show that our resampling algorithm satisfies this property at each step. 
Roughly, our equilibrium property asserts: conditioning on any subset of variable to be resampled and their current values, the remaining variables are ``consistent'' with the Gibbs distribution.
This property can easily guarantee the correctness of our sampling algorithm in a dynamic setting.
Our techniques should be of independent interest and, in particular, could be useful to extend our results to sampling from other spin systems, graphical  models over continuous distributions and/or with global constraints. 

\paragraph{Organization of this paper.}
The preliminaries are given in Section~\ref{sec:prelim}. 
In Section~\ref{sec:algorithms}, we formally define the dynamic sampling problem and give our main algorithm, the \concept{Dynamic Sampler} (Algorithm~\ref{CodeIncrLRS}). The main results are stated in Section~\ref{sec:results}, followed by the related works discussed in Section~\ref{sec:related-work}.
The the correctness and efficiency of the algorithm are analyzed in Section~\ref{sec:equilibrium} and Section~\ref{sec:convergence} respectively. 
And these are applied on specific well-studied graphical models in Section~\ref{sec:applications}. 
Finally in Section~\ref{sec:conclusion}, the conclusion and open problems are given.

\section{Preliminaries}\label{sec:prelim}
\paragraph{Graphical models.}  A (discrete) \concept{graphical model} is a tuple $\mathcal{I}=(V,E,[q],\Phi)$, where $V$ is a set of $n$ \concept{variables}, $E\subseteq 2^V$ is a set of  $m$ \concept{constraints} (or \concept{factors}), and $\Phi=(\phi_a)_{a\in E\cup V}$.
Each $v\in V$ corresponds to a variable of domain $[q]$ and is associated with a function $\phi_v:[q]\to\mathbb{R}_{\ge 0}$.
Each constraint $e\in E$ is a set of variables with $|e|>1$, and is associated with a function $\phi_e:[q]^{e}\to\mathbb{R}_{\ge 0}$.
Without loss of generality, we assume that each $\phi_v$ is {normalized} as a probability distribution over $[q]$,~i.e.~$\sum_{x\in[q]}\phi_v(x)=1$; and each $\phi_e$ is {normalized} as $\phi_e:[q]^{e}\to[0,1]$. A constraint $e$ is called a \concept{hard} constraint if $\phi_e$ is Boolean-valued, and otherwise it is called a \concept{soft} constraint.

Each \concept{configuration} $\sigma\in[q]^V$ assigns every variable one of the $q$ possible values, and is assigned following weight:
\[
w(\sigma)\triangleq \prod_{v\in V}\phi_v\left(\sigma_v\right)\prod_{e\in E}\phi_e\left(\sigma_{e}\right),
\]
where $\sigma_e$ stands for the restriction of $\sigma$ on subset $e\subseteq V$.

\paragraph{Gibbs distribution.}
The \concept{Gibbs distribution} of graphical model $\mathcal{I}$, denoted as $\mu=\mu_{\mathcal{I}}$, is defined as $$\mu(\sigma)\triangleq \frac{w(\sigma)}{Z}$$ where $$Z \triangleq \sum_{\sigma\in[q]^V}w(\sigma)$$ is the \concept{partition function}. We simply write $\mu(\sigma)\propto w(\sigma)$.
Note that normalization of the constraints as described above does not change the Gibbs distribution $\mu$.

\paragraph{Dependency graphs.}

The \concept{dependency graph} of the graphical model $\mathcal{I}$ is a graph with vertex set $E$, where any two constraints $e,f\in E$ are adjacent in the dependency graph if and only if they share a variable.
For any constraint $e\in E$, let $$\Gamma(e)\triangleq\{f\in E\setminus\{e\}\mid  f\cap e\neq\emptyset\}$$ denote the neighborhood of $e$ in the dependency graph.

\paragraph{Variable sets.}
For a subset $D\subseteq V\cup E$ of variables and constraints, we use $$\vbl{D}\triangleq (D\cap V)\cup\left(\bigcup_{e\in D\cap E}e\right)$$ to denote the set of variables in $D$ or involved in constraints in $D$. 

\paragraph{Internal and boundary constraints.}
Given a set of variables $S\subseteq V$, we use $$E(S)\triangleq\{e\in E\mid e\subseteq S\}$$ to denote the set of \concept{internal} constraints defined on variables in $S$; $$\delta(S)\triangleq\{e\in E\mid e\not\subseteq S\wedge e\cap S\neq\emptyset\}$$ the set of \concept{boundary} constraints; and $$E^+(S)\triangleq E(S)\cup\delta(S)$$ the set of constraints that use variables in $S$.

\section{Dynamic Sampling}
\label{sec:algorithms}
We first describe the dynamic sampling setup that we consider in this paper.

\subsection{The problem} 

We consider dynamical graphical model that are subject to local updates.
Let $\mathcal{I}=(V,E,[q],\Phi)$ be the input graphical model.
We consider following types of local updates:
\begin{itemize}
\item \textbf{updates for constraints:} 
modifying the functions $\phi_e$ of existing constraints $e\in E$; or adding new constraints $(e,\phi_e)$ where $e\not\in E$;
\item \textbf{updates for variables:} 
modifying the functions $\phi_v$ of variables $v\in V$.
\end{itemize}
We consider the general case where a sequence of updates may be applied to the graphical model simultaneously.
An \concept{update request}, or simply an \concept{update}, is represented as a pair $(D,\Phi_D)$. Here $D\subseteq V\cup 2^V$ contains the variables and constraints to be updated, where each $a\in D$ is either a variable $a\in V$, or an existing constraint $a\in E$, or a new constraint $a\in 2^V\setminus E$ with $|e|>1$; and $\Phi_D=(\phi_a)_{a\in D}$ specifies the function $\phi_a$ that we are updating to for all $a\in D$.
\begin{remark}[\bf Deletion and other updates]
The {deletion} of a constraint $e\in E$ can be realized by updating its function $\phi_e$ to the constant function with value 1.
The addition/deletion of independent variables with no incident constraint is trivial to implement. 
Therefore, without loss of generality we assume the variable set $V$ remains unchanged.
\end{remark}

The problem of dynamic sampling from graphical model is then defined as following:
  \par\addvspace{.5\baselineskip}
\framebox{
  \noindent
  \begin{tabularx}{14cm}{@{\hspace{\parindent}} l X c}
    \multicolumn{2}{@{\hspace{\parindent}}l}{\underline{Dynamic Sampling from Graphical Model}} \\
    \textbf{Input:} & a graphical model $\mathcal{I}$, a sample $\boldsymbol{X}\sim\mu_{\mathcal{I}}$,\\
    & and an update  $(D,\Phi_D)$ that modifies $\mathcal{I}$ to $\mathcal{I}'$;\\
    \textbf{Output:} & a sample $\boldsymbol{X}'\sim\mu_{\mathcal{I}'}$.
  \end{tabularx}
 }
\par\addvspace{.5\baselineskip}

\noindent
We assume that the update $(D,\Phi_D)$ is fixed arbitrarily by an offline adversary independently of the sample $\boldsymbol{X}\sim\mu_{\mathcal{I}}$.
A stronger adaptive adversary is discussed later in Remark~\ref{remark:adversary}.

\begin{remark}[\bf Assumption about starting sample]
The assumption of having a sample from the current graphical model can be easily achieved initially by starting from an empty graphical model on $n$ variables, whose Gibbs distribution is the product distribution $\bigotimes_{v\in V}\phi_v$, after which the availability of such sample is invariant assuming the correctness of dynamic sampling.
\end{remark}

\subsection{The dynamic sampler}
\begin{algorithm}[ht]
\SetKwInOut{Input}{Input}
\SetKwInOut{Update}{Update}
\SetKwInOut{Output}{Output}
\Input{a graphical model $\mathcal{I}$ and a random sample $\boldsymbol{X}\sim\mu_{\mathcal{I}}$;}
\Update{an update $(D,\Phi_D)$ which modifies $\mathcal{I}$ to $\mathcal{I}'$;}
\Output{a random sample $\boldsymbol{X}\sim\mu_{\mathcal{I}'}$;}
$\mathcal{R}\gets \vbl{D}$\; \label{Incr-initial-R}
\While{$\mathcal{R}\neq \emptyset$\label{Incr-while-loop-1}
}{
		$(\boldsymbol{X},\mathcal{R})\gets$\Resample($\mathcal{I}'$, $\boldsymbol{X}$, $\mathcal{R}$)\;\label{Incr-while-loop-2}
		}
\Return{$\boldsymbol{X}$}\;
\caption{Dynamic Sampler}\label{CodeIncrLRS}
\end{algorithm}

\begin{algorithm}[ht]
\SetKwInOut{Input}{Input}
\SetKwInOut{Output}{Output}
\Input{a graphical model $\mathcal{I}=(V, E, [q], \Phi)$,  a configuration $\boldsymbol{X}\in[q]^V$ and a $\mathcal{R}\subseteq V$;}
\Output{a new pair $(\boldsymbol{X}',\mathcal{R}')$ of configuration $\boldsymbol{X}'\in[q]^V$ and subset $\mathcal{R}'\subseteq V$;}
		for each $e\in E^+(\mathcal{R})$, {in parallel}, compute $\kappa_e\triangleq\frac{1}{\phi_e\left(X_{e}\right)}\min_{x \in [q]^e:\,x_{e\cap \mathcal{R}} = X_{e \cap \mathcal{R}}}\phi_e(x)$\;\label{LRS-compute-kappa}
		for each $v \in {\mathcal{R}}$, {in parallel}, resample $X_v\in[q]$ independently according to distribution $\phi_v$\; \label{LRS-resample-step}
		for each $e\in E^+(\mathcal{R})$, {in parallel}, sample $F_e\in\{0,1\}$ ind.~with $\Pr[F_e=0]=\kappa_e\cdot \phi_e\left(X_{e}\right)$\;\label{LRS-remove-step}
$\boldsymbol{X}'\gets\boldsymbol{X}$ and $\mathcal{R}'\gets\bigcup_{e\in E: F_e=1} e$\;	\label{LRS-construct-R}
\Return{$(\boldsymbol{X}',\mathcal{R}')$.}
\caption{\Resample($\mathcal{I}$, $\boldsymbol{X}$, $\mathcal{R}$)}\label{CodeResample}
\end{algorithm}

\noindent
We give a dynamic sampling algorithm (Algorithm~\ref{CodeIncrLRS}) for the above problem.
The algorithm proceeds by calling a  resampling subroutine (Algorithm~\ref{CodeResample}) to resample the variables in $\vbl{D}=(D\cap V)\cup\left(\bigcup_{e\in D\cap E}e\right)$, which is the set of all variables involved in the update $(D,\Phi_D)$.

\paragraph{The local resampling procedure.}  
The resampling subroutine, the \Resample{} (Algorithm~\ref{CodeResample}), is the core of our algorithm. 
For a fixed a graphical model $\mathcal{I}=(V,E,[q],\Phi)$, the resampling procedure takes as input a pair $(\boldsymbol{X},\mathcal{R})$ of configuration $\boldsymbol{X}\in[q]^V$ and  subset $\mathcal{R}\subseteq V$ of variables, where $\mathcal{R}$ represents the current \concept{resample set} that contains the ``problematic'' variables to be resampled.
The resampling procedure transforms this input pair $(\boldsymbol{X},\mathcal{R})$ to a random pair $(\boldsymbol{X}',\mathcal{R}')$ of new configuration $\boldsymbol{X}'\in[q]^V$ and new resample set $\mathcal{R}'\subseteq V$ by the following simple rules:
\begin{enumerate}
\item 
For each problematic variable $v\in \mathcal{R}$, resample its value $X_v\in[q]$ independently according to the distribution $\phi_v$. 
We denote by $\boldsymbol{X}'$ the configuration resulting from this resampling.
 \item 
For each constraint $e$ affected by the resampling (because some variables in $e$ are resampled), this constraint $e$ is violated  ($F_e=1$ in algorithm)  independently with probability $1-\kappa_e\cdot\phi(X_e')$. The variables involved in the violated constraints form the new resample set $\mathcal{R}'$.
\end{enumerate}
Here $\kappa_e\in[0,1]$ is a correcting factor computed from the pre-resample configuration $\boldsymbol{X}$ as:
\begin{align}
\kappa_e=\frac{1}{\phi_e\left(X_{e}\right)}\min_{\substack{x \in [q]^e\\x_{e\cap \mathcal{R}} = X_{e \cap \mathcal{R}}}}\phi_e(x) \quad (\text{with convention }0/0=1).
\label{eq:correction-factor}
\end{align}
where the min gives the  minimum value of function $\phi_e$  estimated from observing $\boldsymbol{X}$ within $\mathcal{R}$. 

Note that $\kappa_e$'s are calculated from the configuration $\boldsymbol{X}$ {before} the resampling (thus Line~\ref{LRS-compute-kappa} and Line~\ref{LRS-resample-step} in Algorithm~\ref{CodeResample} are not interchangeable).
For the \emph{internal} constraints $e\in E(\mathcal{R})$, $\kappa_e$ is always 1 thus has no effect on violating such constraints.
It may only bias the probabilities of violating the \emph{boundary} constraints $e\in\delta(\mathcal{R})$ by increasing them.
Algorithm~\ref{CodeIncrLRS} repeats the above process until the resample set $\mathcal{R}$ is empty.

While our algorithm is simple, establishing its correctness, i.e., it outputs from the right distribution $\mu_{\mathcal{I}'}$, is not. 
Certain steps in the algorithm, for instance, the definition of $\kappa_e$'s, are crucial for this purpose and become more clear from the analysis.

\begin{remark}[\bf Features of the algorithm]
Unlike the MCMC sampling, our algorithm is a Las Vegas sampler that knows when it terminates --  this is important in simulations.
Also, besides being dynamic, our sampling algorithm is parallelizable, and can be implemented as communication-efficient distributed algorithms in a distributed sampling model considered~\cite{feng2017sampling,fischer2018simple}.
\end{remark}

\begin{remark}[\bf Comparison with algorithms for constructing and sampling LLL solution]
The famous Moser-Tardos algorithm~\cite{moser2010constructive,haeupler2011new, Harris2013The, harvey2015algorithmic} for constructing LLL solution also relies on local resampling of random variables that violate constraints. 
It was observed by~\cite{harris2016new, guo2016uniform} that the Moser-Tardos  algorithm does not generate the correct distribution except for very restricted types of constraints.
This was fixed by the partial rejection sampling method~\cite{guo2016uniform} for uniform sampling LLL solution (graphical models with hard constraints), by resampling an ``unblocking'' superset of violating random variables (which in our setting corresponds to the case where $\kappa_e=1$ for all boundary constraints $e$).
A crucial difference between our algorithm and all these previous resampling-based algorithms, is that {\bf our algorithm uses both the current values of the variables and the values after the resampling in determining whether a constraint is violated}.
This seems to be a key to sample correctly from general graphical models.
\end{remark}

\section{Main Results}\label{sec:results}
\subsection{The equilibrium property}
The correctness and efficiency of our dynamic sampling algorithm rely on an equilibrium property.
In this section we present this equilibrium property that is key to our results. 
First, we introduce the some preliminaries and explain our ``conditional Gibbs property''.

Let $\mathcal{I} =(V, E, [q], \Phi)$ be a graphical model with Gibbs distribution $\mu=\mu_{\mathcal{I}}$.
Given any  $S \subseteq V$ and a \concept{boundary condition} $\tau \in [q]^{V\setminus{S}}$ such that $\Pr_{X\sim\mu}[X_{V\setminus S}=\tau]>0$, let $\mu_S^{\tau}$ denote the \concept{marginal distribution} induced by $\mu$ over $S$ conditioning on $\tau$,~i.e.,
$$\forall\sigma\in[q]^S,\quad \mu_S^{\tau}(\sigma)\triangleq \Pr_{X\sim\mu}[X_S=\sigma\mid X_{V\setminus S}=\tau].$$
We extend the definition of the marginal distribution $\mu_S^{\tau}$ to the boundary conditions $\tau$ that may locally violate hard constraints.
For any  $S \subseteq V$ and $\tau \in [q]^{V\setminus{S}}$, 
let $\mgn{\mu}_{S}^{\tau}$ be the distribution over all $\sigma\in [q]^S$ such that
\begin{align}
\mgn{\mu}_{S}^{\tau}(\sigma) \propto w_S^{\tau}(\sigma)\triangleq
\prod_{v \in S}\phi_v(\sigma_v)\prod_{e \in E\atop e\cap S \neq \emptyset}\phi_e((\sigma \cup \tau)_{e}),\label{eq:conditional-margin-dist}
\end{align}
if $\sum_{\sigma\in[q]^S}w_S^{\tau}(\sigma)>0$. 
And if $\sum_{\sigma\in[q]^S}w_S^{\tau}(\sigma)=0$, $\mgn{\mu}_{S}^{\tau}(\sigma)$ is no longer a well-defined distribution, in which case we assume $\mgn{\mu}_{S}^{\tau}(\sigma)=0$ for all $\sigma\in [q]^S$ as a convention. 
Clearly, $\mu_S^{\tau}=\mgn{\mu}_S^{\tau}$ for any feasible {boundary condition} $\tau \in [q]^{V\setminus{S}}$ with $\Pr_{X\sim\mu}[X_{V\setminus S}=\tau]>0$.

Recall that our sampling algorithm maintains a random pair $(X,\mathcal{R})$ of a configuration $X\in[q]^V$ and a ``resample set''  $\mathcal{R}\subseteq V$ of problematic variables.
In the following we consider the random pair $(X,\mathcal{S})$ where $\mathcal{S}\triangleq V\setminus\mathcal{R}$ represents the ``sanity set'' which contains non-problematic variables.

\begin{definition}[{\bf Conditional Gibbs property}]\label{DefConditionalGibbs}
A random pair $(X,\mathcal{S})\in[q]^V\times 2^V$ is said to be \concept{conditionally Gibbs} with respect to $\mathcal{I}$ if for any $S\subseteq V$ and any $\tau\in[q]^{V\setminus{S}}$ that $\Pr_{(X,\mathcal{S})}[\mathcal{S} = S\wedge X_{V\setminus{S}} = \tau] > 0$, 
it holds that $\mgn{\mu}_S^{\sigma}$ is a well-defined probability distribution over $[q]^S$ and
\begin{align}
\forall \sigma\in[q]^S, \quad \Pr[X_\mathcal{S}=\sigma\mid \mathcal{S} = S\wedge X_{V\setminus{S}} = \tau]=\mgn{\mu}_S^{\tau}(\sigma),
\label{eq:conditional-Gibbs}
\end{align}
i.e., the distribution of $X_{\mathcal{S}}$ conditioning on $\mathcal{S}=S$ and $X_{V\setminus S}=\tau$ is the same as the marginal Gibbs distribution $\mgn{\mu}_S^{\tau}$.
\end{definition}
\noindent
This definition states a key property for the random pair $(X,\mathcal{R})$ generated by the resampling procedure: 
conditioning on any possible resample set $\mathcal{R}=R$ and its configuration $X_R=\tau$, the variables in $\mathcal{S}=V\setminus\mathcal{R}$ follow the marginal Gibbs distribution $\mu$ over $\mathcal{S}$ with boundary condition~$\tau$.

For a fixed  $\mathcal{I}$, each call of \Resample{} transforms the current pair $(X,\mathcal{R})$ to a new pair $(X', \mathcal{R}')\in[q]^V\times 2^V$ as:
$(X', \mathcal{R}') \gets \Resample(\mathcal{I}, X, \mathcal{R})$.
This naturally defines a Markov chain on space $[q]^V\times 2^V$ over states $(X, \mathcal{S})$ as follows:
\begin{definition}[\bf The resampling chain]\label{def:MRes}
Let $\MC{M}_{\LRes}$  denote the Markov chain on space $[q]^V\times 2^V$ over states $(X, \mathcal{S})$ defined as follows. 
Each transition $(X,\mathcal{S})\to (X',\mathcal{S}')$ of this chain is as:
\begin{align*}
\begin{array}{rl}
(X', \mathcal{R}') &\gets\Resample(\mathcal{I}, X, V\setminus \mathcal{S});\\
\mathcal{S}' &\gets V\setminus  \mathcal{R}'.
\end{array}
\end{align*}
The chain stops when $\mathcal{S}=V$. We call this chain $\MC{M}_{\LRes}$ the \concept{resampling chain}.
\end{definition}

\noindent
The reason we define this chain using the ``sanity set'' $\mathcal{S}$ (which is the complement of the ``resample set'' $\mathcal{R}$ maintained by the resampling algorithm \Resample) instead of using $\mathcal{R}$ itself is the following:
while the resampling algorithm works by fixing the problematic variables within set $\mathcal{R}$, the analysis should focus on the distribution of non-problematic variables outside $\mathcal{R}$.

We use $\MC{M}$ to abstractly denote Markov chains $(X,\mathcal{S})$ on space $[q]^V\times 2^V$. A crucial property to guarantee such chain $\MC{M}$ always sample from the correct Gibbs distribution $\mu_{\mathcal{I}}$ when stops (i.e.~conditioning on $\mathcal{S}=V$, the sample $X$ follows $\mu_{\mathcal{I}}$) is the following equilibrium condition.

\begin{condition}[\bf The equilibrium condition]
\label{ConMCLV-stationary}
If a random pair $(X,\mathcal{S})\in[q]^V\times 2^V$ is conditionally Gibbs with respect to graphical model $\mathcal{I}$, then after one-step transition of $\MC{M}$,  the new pair $(X',\mathcal{S}')$ is also conditionally Gibbs with respect to $\mathcal{I}$.
\end{condition}
\noindent
Note that the goal of resampling is to draw a sample $X$ from the correct distribution at the end when the resample set $\mathcal{R}=\emptyset$ (i.e.~$\mathcal{S}=V$). 
This equilibrium condition promises something much stronger: at any step of resampling, even while the current set of problematic variables $\mathcal{R}$ may not be empty, the sample $X$ is always faithful to the correct distribution over the remaining variables.

If this equilibrium condition indeed holds for the chain $\MC{M}_{\LRes}$ defined above, then it implies the correctness of our dynamic sampling algorithm (Algorithm~\ref{CodeIncrLRS}).
However, this equilibrium condition can  be difficult to verify in general and, in Section~\ref{EquilibriumConditions}, we present a sufficient condition which gives a refined equilibrium condition that implies Condition~\ref{ConMCLV-stationary} and is more explicit to verify.

\subsection{The correctness of the algorithm}
By verifying the equilibrium condition on the resampling chain $\MC{M}_{\LRes}$, we show that our dynamic sampler always outputs from the correct distribution.

\begin{theorem}[\bf Correctness of the dynamic sampling algorithm]\label{ThmLRSCorrect}
Assuming the input sample $\boldsymbol{X}\sim\mu_{\mathcal{I}}$, upon termination,
Algorithm~\ref{CodeIncrLRS} returns a perfect sample $\boldsymbol{X}'\sim\mu_{\mathcal{I}'}$.
\end{theorem}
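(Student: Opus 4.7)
The plan is to establish the theorem via a single inductive invariant: throughout the execution of Algorithm~\ref{CodeIncrLRS}, the state $(\boldsymbol{X},\mathcal{S})$ with $\mathcal{S}\triangleq V\setminus\mathcal{R}$ is conditionally Gibbs (Definition~\ref{DefConditionalGibbs}) with respect to the \emph{updated} model $\mathcal{I}'$. Granted this invariant, the conclusion is immediate: at termination $\mathcal{R}=\emptyset$, so $\mathcal{S}=V$, and taking $S=V$ with the empty boundary condition in~\eqref{eq:conditional-Gibbs} gives
\[
\Pr[\boldsymbol{X}=\sigma\mid \mathcal{S}=V] \;=\; \mgn{\mu}_V^{\emptyset}(\sigma) \;=\; \mu_{\mathcal{I}'}(\sigma),
\]
which is exactly the distribution we want to output.

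For the base case, after Line~\ref{Incr-initial-R} the resample set is the deterministic set $\mathcal{R}=\vbl{D}$, so $\mathcal{S}_0=V\setminus\vbl{D}$ is deterministic as well. The update $(D,\Phi_D)$ only modifies the factors $\phi_a$ for $a\in D$, and each such factor depends solely on variables in $\vbl{D}$. Hence the ratio $w_{\mathcal{I}'}(\sigma)/w_{\mathcal{I}}(\sigma)$ factors through $\sigma_{\vbl{D}}$, which implies that for every boundary condition $\tau\in[q]^{\vbl{D}}$ the conditional distribution of $X_{V\setminus\vbl{D}}$ given $X_{\vbl{D}}=\tau$ agrees under $\mu_{\mathcal{I}}$ and $\mu_{\mathcal{I}'}$. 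Since $\boldsymbol{X}\sim\mu_{\mathcal{I}}$ by hypothesis, this translates directly into the conditional Gibbs property of $(\boldsymbol{X},\mathcal{S}_0)$ with respect to $\mathcal{I}'$.

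For the inductive step, each iteration of the while loop (Lines~\ref{Incr-while-loop-1}--\ref{Incr-while-loop-2}) executes exactly one transition of the resampling chain $\MC{M}_{\LRes}$ from Definition~\ref{def:MRes}, applied to the model $\mathcal{I}'$. The equilibrium condition (Condition~\ref{ConMCLV-stationary}), once established for $\MC{M}_{\LRes}$, asserts that a single transition of this chain preserves the conditional Gibbs property with respect to $\mathcal{I}'$. Iterating along the trajectory of the algorithm preserves the invariant until termination.

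The entire technical weight of the argument is thus concentrated in verifying Condition~\ref{ConMCLV-stationary} for $\MC{M}_{\LRes}$, which will be the main obstacle. This is delicate because \Resample{} both resamples problematic variables according to $\phi_v$ and flips boundary indicators $F_e$ using the correction factors $\kappa_e$ of~\eqref{eq:correction-factor}; these correction factors are engineered precisely so that, when one conditions on the new resample set $\mathcal{R}'$ and the values of $X_{\mathcal{R}'}$, the distribution of the remaining variables matches the desired marginal $\mgn{\mu}_{V\setminus\mathcal{R}'}^{\,\cdot}$ of $\mathcal{I}'$. I would carry out this verification separately in Section~\ref{sec:equilibrium}, where a refined local sufficient condition for equilibrium is introduced and checked against the transition rule of \Resample{}; the present proof just invokes the resulting statement and assembles the three steps above.
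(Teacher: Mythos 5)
Your proposal is correct and follows essentially the same route as the paper: you reduce Theorem~\ref{ThmLRSCorrect} to (i) the observation that conditional Gibbs at $\mathcal{S}=V$ yields the target distribution, (ii) the base case that $(\boldsymbol{X}, V\setminus\vbl{D})$ is conditionally Gibbs w.r.t.\ $\mathcal{I}'$ because every modified factor is confined to $\vbl{D}$, and (iii) invariance of the property under one step of $\MC{M}_{\LRes}$, which you rightly defer to the verification of Condition~\ref{ConMCLV-stationary} via the refined Condition~\ref{ConMCLV-DetailedBalance}. This matches the paper's structure (Section~\ref{EquilibriumConditions}, Lemmas~\ref{LemRefineToMacro} and~\ref{LemMRDetailedBalance}) with only cosmetic differences in how the base case is phrased.
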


\noindent
In previous resampling-based algorithms~\cite{moser2010constructive,haeupler2011new, Harris2013The, harvey2015algorithmic,guo2016uniform}, the analysis keeps track of an infinite-size table of random variables for resampling. 
In contrast, the correctness of our dynamic sampler is due to the above equilibrium condition, 
which provides a better understanding of why the algorithm always outputs from the correct distribution even in a dynamic setting,
and also provides new information for analyzing the running time.

In fact, this theorem along with the equilibrium property, are proved for a general class of resampling-based sampling algorithms (formally stated in Section~\ref{sec:meta-algorithm}) that include our dynamic sampler as a special case.

\begin{remark}[\bf Stronger adversary]\label{remark:adversary}
The above theorem holds even when the update $(D,\Phi_D)$ is provided by an online adaptive adversary satisfying certain locality property:
the update $(D,\Phi_D)$ may be correlated with the current sample $\boldsymbol{X}\sim\mu_{\mathcal{I}}$, but conditioning on any particular $(D,\Phi_D)$ and any current assignment $X_{\vbl{D}}=\tau$, the distribution of $X_S$, where $S\triangleq V\setminus \vbl{D}$, is precisely the marginal Gibbs distribution $\mu_S^\tau$ induced by $\mu_{\mathcal{I}}$ over $S$.
We call such an adversary a \concept{locally adaptive adversary}, since it covers the natural adaptive adversaries where $D$ is constructed incrementally by observing $\boldsymbol{X}$ inside $\vbl{D}$. Such adversary is stronger than the offline adversary where $(D,\Phi_D)$ is fixed arbitrarily independent of the sample $\boldsymbol{X}\sim\mu_{\mathcal{I}}$.
\end{remark}

\subsection{The running time of the algorithm}

While our algorithm is always correct, for its running time to be efficient, some conditions on the graphical model must be satisfied.
The reason is that sampling from graphical models in general is NP-hard, which is true even for static and approximate sampling~\cite{JVV86,galanis2015inapproximability}. 
The following theorem gives a sufficient condition that guarantees that our dynamic sampling algorithm is efficient and each update takes time proportional to the size of the update.
The time complexity of the algorithm is measured by the total number of individual resamplings of variables $X_v\in[q]$ and indicators $F_e\in\{0,1\}$ made by the algorithm during its execution. 

\begin{theorem}[\bf Fast convergence of the dynamic sampling algorithm]\label{ThmLRSConv}
Let $\mathcal{I}=(V,E,[q],\Phi)$ be a graphical model and $(D,\Phi_D)$ an update to $\mathcal{I}$.
Assume that the followings hold for the updated graphical model $\mathcal{I}'=(V,E',[q],\Phi')$.
For every $e\in E'$, we have $\phi'_e:[q]^e\to[B_e,1]$ for some $0< B_e\le 1$, and there is a constant $0<\delta<1$ such that 
\begin{align}
\forall e\in E',\quad
B_e\ge \left( 1-\frac{1-\delta}{d+1}\right)^{1/2},\label{eq:converge-cond}
\end{align}
where $d\triangleq\max_{e\in E}|\Gamma(e)|$ is the maximum degree of the dependency graph of $\mathcal{I}'$.
Then Algorithm~\ref{CodeIncrLRS} terminates within $O(\log |D|)$ iterations in expectation.
Further, if $d=O(1)$ and $\max_{e\in E}|e|=O(1)$,  the total number of resamplings is bounded by $O(|D|)$ in expectation.
\end{theorem}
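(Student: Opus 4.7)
The plan is to track the single scalar $N_t \triangleq |E^+(\mathcal{R}_t)|$, where $\mathcal{R}_t$ is the resample set at the start of iteration $t$, and show that $N_t$ shrinks geometrically in expectation under \eqref{eq:converge-cond}. Since $\mathcal{R}_t = \emptyset$ is absorbing for the main loop, one drift bound will simultaneously give the $O(\log|D|)$ iteration count and the $O(|D|)$ total-resampling count.

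First I would bound the per-constraint violation probability. For any $e \in E^+(\mathcal{R})$, Algorithm~\ref{CodeResample} sets $\Pr[F_e = 0 \mid \boldsymbol{X},\boldsymbol{X}'] = \kappa_e\, \phi'_e(X'_e)$. Because $\phi'_e$ takes values in $[B_e, 1]$, the definition in \eqref{eq:correction-factor} yields $\kappa_e \ge B_e$, and $\phi'_e(X'_e) \ge B_e$ trivially, so $\Pr[F_e = 1] \le 1 - B_e^2 \le (1-\delta)/(d+1)$ by the hypothesis \eqref{eq:converge-cond}. This is precisely why the hypothesis involves $B_e^2$ rather than $B_e$: the acceptance probability is a product of two factors each bounded below by $B_e$, one from the pre-resample configuration via $\kappa_e$ and one from the post-resample configuration via $\phi'_e(X'_e)$.

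Next I would set up the drift. The new resample set is $\mathcal{R}_{t+1} = \vbl{V_t}$, where $V_t \triangleq \{e \in E^+(\mathcal{R}_t) : F_e = 1\}$ is the set of violated constraints. Any constraint $f$ touching $\mathcal{R}_{t+1}$ must share a variable with some $e \in V_t$, so either $f = e$ or $f \in \Gamma(e)$; hence $N_{t+1} \le (d+1)|V_t|$. Taking conditional expectation and applying the violation-probability bound gives $\E{N_{t+1}\mid \mathcal{F}_t} \le (d+1)\cdot\tfrac{1-\delta}{d+1}\cdot N_t = (1-\delta) N_t$, and iterating yields $\E{N_t} \le (1-\delta)^t N_0$. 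Since each variable of $\vbl{D}$ belongs to at most $d+1$ constraints, $N_0 \le (d+1)|\vbl{D}| \le (d+1)\max_e|e|\cdot|D|$, which is $O(|D|)$ under the hypothesis $d,\max_e|e|=O(1)$ and at worst polynomial in $|D|$ more generally.

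Finally I would convert the drift into the two quantitative bounds. Let $T \triangleq \min\{t : \mathcal{R}_t = \emptyset\}$. Since $N_t$ is a non-negative integer and $0$ is absorbing, $\Pr[T > t] = \Pr[N_t \ge 1] \le \E{N_t} \le (1-\delta)^t N_0$, so splitting the sum at $t^* = \lceil \log_{1/(1-\delta)} N_0 \rceil$ gives $\E{T} \le t^* + \sum_{t > t^*}(1-\delta)^t N_0 = O(\log N_0) = O(\log|D|)$. For total resamplings, each iteration performs $|\mathcal{R}_t| + |E^+(\mathcal{R}_t)|$ elementary operations, both of which are $O(N_t)$ when $\max_e|e| = O(1)$ (using $|\mathcal{R}_t| \le |\vbl{E^+(\mathcal{R}_t)}| \le \max_e|e|\cdot N_t$), so the total expected work is at most $\sum_{t\ge 0}\E{N_t} \le N_0/\delta = O(|D|)$. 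The only real subtlety is the squaring of $B_e$ in the first step; everything after that is linearity of expectation on a single scalar, so no Moser--Tardos witness-tree bookkeeping is required.
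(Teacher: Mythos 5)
Your contraction argument is correct and uses essentially the same mechanism as the paper: $\kappa_e\ge B_e$ from the min, $\phi'_e(X'_e)\ge B_e$ trivially, so $\Pr[F_e=1]\le 1-B_e^2\le(1-\delta)/(d+1)$, and a factor-$(d+1)$ blowup because every constraint in $E^+(\mathcal{R}_{t+1})$ lies in $\Gamma(e)\cup\{e\}$ for some violated $e$. The genuine difference is the potential. You track $N_t=|E^+(\mathcal{R}_t)|$; the paper tracks the minimum-cover potential $\HLRes(\mathcal{R})=\min\{|\mathcal{F}|:\mathcal{F}\subseteq E',\ \mathcal{R}\subseteq\bigcup_{e\in\mathcal{F}}e\}$. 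Both contract by $(1-\delta)$ per step by the same union bound, so neither is harder to drive — but the cover potential starts at $\HLRes(\mathcal{R}_0)\le|D|$ (each $a\in D$ contributes one covering constraint), whereas $N_0$ can be as large as $(d+1)|D|$ (you can actually tighten your $(d+1)\max_e|e|\,|D|$ to $(d+1)|D|$, since each $a\in D$ has $|E^+(\vbl{\{a\}})|\le d+1$). This gap matters for the theorem's \emph{first} claim.

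Concretely, your argument yields $\E{T}=O(\log N_0)=O(\log|D|+\log d)$, which matches the theorem's $O(\log|D|)$ only when $d=O(1)$. The theorem, however, asserts $O(\log|D|)$ iterations with no boundedness assumption on $d$ — that assumption is only imposed for the second, total-work claim. Your parenthetical ``at worst polynomial in $|D|$ more generally'' does not close this: $d$ and $\max_e|e|$ are properties of $\mathcal{I}'$ and need bear no relation to $|D|$ (e.g.\ an update of a single constraint in a dense dependency graph has $|D|=1$ but $N_0=d+1$). The fix is exactly the paper's choice of potential: replace $N_t$ by $\HLRes(\mathcal{R}_t)$, note $\HLRes(\mathcal{R}_{t+1})\le|\{e\in E^+(\mathcal{R}_t):F_e=1\}|$, bound that expectation by summing $1-B_e^2$ over $E^+(\mathcal{R}_t)\subseteq\bigcup_{e\in\mathcal{F}}(\Gamma(e)\cup\{e\})$ for a minimum cover $\mathcal{F}$ of $\mathcal{R}_t$, and initialize with $\HLRes(\mathcal{R}_0)\le|D|$. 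Everything else in your write-up — Markov, sum-split at $t^*$, and the $\sum_t\E{N_t}\le N_0/\delta$ work bound under $d,\max_e|e|=O(1)$ — goes through unchanged and agrees with the paper.
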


\noindent
For general graphical models, if only knowing the parameters $B_e$'s and $d$, one cannot expect much improvement over this sufficient condition.
For example, for the Ising model (ferromagnetic or anti-ferromagnetic, with arbitrary local fields) on graphs with maximum degree $\Delta$ with inverse temperature $\beta$, the bound~\eqref{eq:converge-cond} translates to a condition $B_e=\mathrm{e}^{-2|\beta|}>1-\frac{1}{4\Delta}+o(\frac{1}{\Delta})$; while the sampling problem (even in a static and approximate setting) for the anti-ferromagnetic Ising model in the ``non-uniqueness regime'', where $\mathrm{e}^{-2|\beta|}<1-\frac{2}{\Delta}$,  is NP-hard~\cite{galanis2016inapproximability}.

While this sufficient condition on general graphical models focuses on graphical models with soft constraints, our dynamic sampling algorithm is not restricted to such settings. For specific graphical models with hard constraints, for instance, the hardcore model, we show a regime of fast convergence that improves the previous recent result in~\cite{guo2016uniform}.

\begin{remark}[{\bf Implications to static sampling}]
Theorem~\ref{ThmLRSCorrect} and Theorem~\ref{ThmLRSConv} together provide linear-time static sampling algorithms for graphical models with  $d=O(1)$ and $\max_{e\in E}|e|=O(1)$ and satisfying~\eqref{eq:converge-cond}, where the algorithm can draw a perfect sample from the graphical model within $O(n)$ cost in expectation by adding constraints onto independent random variables. 
In contrast, for MCMC sampling there is a $\Omega(n\log n)$ lower bound on the mixing time of local Markov chains~\cite{hayes2007general}.
\end{remark}

\subsection{Dynamic sampling from the spin systems}
On spin systems, e.g.~the Ising model or the hardcore model, by exploiting the equilibrium condition (Condition~\ref{ConMCLV-stationary}) we obtain dynamic samplers with improved convergence conditions.
\paragraph{The Ising/Potts model.}
In the \concept{Ising model} on graph $G=(V,E)$, each edge $e\in E$  is associated with an \concept{inverse temperature} $\beta_e\in\mathbb{R}$.
The Gibbs distribution over all configurations $\sigma\in\{-1,+1\}^V$ is defined such that 
\[
\mu(\sigma)\propto \prod_{e=(u,v)\in E}\exp(\beta_e\sigma_u\sigma_v).
\]
The model is called ferromagnetic if all $\beta_e>0$, and anti-ferromagnetic if all $\beta_e<0$.

For the Ising model, our dynamic sampler is efficient in a regime described as follows.
\begin{theorem}
\label{ThmIsing}
Assuming $\mathrm{e}^{-2|\beta_e|}\geq 1-\frac{1}{\alpha\Delta+1}$ for all $e\in E$, where $\alpha \approx 2.221\ldots$ is the root of $ \alpha = 1+\frac{2}{1+ \exp\left(-{1}/{\alpha}\right)} $, there is a dynamic sampling algorithm for the Ising model on graphs with maximum degree $\Delta=O(1)$, that upon updates of $k$ edges and vertices, returns a perfect sample within $O(\log k)$ rounds and $O(k)$ incremental costs in expectation.
\end{theorem}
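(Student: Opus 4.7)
The plan is to instantiate Algorithm~\ref{CodeIncrLRS} on the Ising model and push the analysis of the resampling chain $\MC{M}_{\LRes}$ beyond the generic sufficient condition of Theorem~\ref{ThmLRSConv}, exploiting two features specific to Ising: (i) each constraint $\phi_e$ is a function of only two variables and takes only the two values $\{e^{-2|\beta_e|},1\}$ after rescaling, so $B_e=e^{-2|\beta_e|}$; and (ii) the equilibrium condition (Condition~\ref{ConMCLV-stationary}) lets us compute violation probabilities under the conditional Gibbs measure rather than against worst-case boundary configurations. Correctness is immediate from Theorem~\ref{ThmLRSCorrect}; the task is thus to show a geometric contraction of $|\mathcal{R}|$ under the hypothesis $e^{-2|\beta_e|}\ge 1-\frac{1}{\alpha\Delta+1}$.

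First I would rewrite each constraint $\phi_e(\sigma_u,\sigma_v)=\exp(\beta_e\sigma_u\sigma_v-|\beta_e|)$ so that $\phi_e$ takes the value $1$ on the ``good'' sign and $B_e=e^{-2|\beta_e|}$ on the ``bad'' sign, and express the correction factor from~\eqref{eq:correction-factor}: for a boundary edge $e=(v,u)$ with $v\in\mathcal{R}$, $u\in\mathcal{S}$, one has $\kappa_e=B_e/\phi_e(X_v,X_u)$. Hence the per-step probability that $e$ is marked violated is
\[
\Pr[F_e=1\mid X_u,X_v,X'_v]=1-\frac{B_e\,\phi_e(X'_v,X_u)}{\phi_e(X_v,X_u)}.
\]
Averaging $X'_v$ over $\phi_v$ and, crucially, averaging $X_v$ over its conditional Gibbs law (which we are entitled to by Condition~\ref{ConMCLV-stationary} applied to the state $(X,\mathcal{S})$ and then integrating out $X_v$ after the resample, in the spirit of the refined condition developed in Section~\ref{EquilibriumConditions}), the boundary-violation probability collapses to an expression depending only on $B_e$ and on the marginal $\mu_u$ at $u$.

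Next, to track the size of $\mathcal{R}'$, I would sum the contributions of new violated edges over the two endpoints. Internal edges $e\in E(\mathcal{R})$ have $\kappa_e=1$ and contribute with probability $1-\phi_e(X'_e)$, which is at most $1-B_e$ conditioned on a bad configuration; boundary edges are controlled as above. The dependency-graph degree is $d\le 2(\Delta-1)$, but the factor of $2$ there is what is eliminated by the equilibrium argument: once we separate the two endpoints of each edge cleanly, the effective per-vertex contribution has size roughly $\Delta\cdot(1-B_e)\cdot q_\star$, where $q_\star$ is a marginal probability at a $\mathcal{S}$-neighbor of $v$. Setting $B_e=1-\tfrac{1}{\alpha\Delta+1}$ and optimizing the contraction factor over the worst marginal $q_\star$ produces a one-dimensional fixpoint equation whose threshold is exactly $\alpha=1+\tfrac{2}{1+e^{-1/\alpha}}$; the factor $2$ comes from both endpoints of a violated edge re-entering $\mathcal{R}'$, and $e^{-1/\alpha}$ is the limiting value of $B_e^{\Delta}$ at the threshold. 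I would then conclude $\E{|\mathcal{R}'|\mid \mathcal{R}}\le \rho|\mathcal{R}|$ for some $\rho<1$, so $|\mathcal{R}|$ decays geometrically; by Markov's inequality the number of iterations is $O(\log k)$ in expectation, and the total number of individual resamplings is $\sum_t \E{|\mathcal{R}^{(t)}|}=O(k)$, establishing the theorem.

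The main obstacle is carrying out the per-step calculation of $\E{|\mathcal{R}'|\mid \mathcal{R}}$ in a way that exploits the conditional Gibbs property to avoid the ``square root'' loss in~\eqref{eq:converge-cond}: naively, the Theorem~\ref{ThmLRSConv} bound requires $B_e^2\gtrsim 1-\tfrac{1}{2(\Delta-1)}$, coming from treating both endpoints of an edge independently as adversarial; in contrast, the equilibrium argument correlates the two endpoints through the conditional Gibbs measure and replaces the squared $B_e$ by a linear term, yielding the target threshold. Matching the resulting contraction rate to the fixpoint $\alpha=1+2/(1+e^{-1/\alpha})$ is a routine but delicate optimization over the Gibbs marginal at the $\mathcal{S}$-boundary, and getting the constant right (rather than merely $O(1/\Delta)$) is the genuinely non-trivial step in this proof.
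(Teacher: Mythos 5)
Your high-level plan matches the paper's proof: correctness comes directly from Theorem~\ref{ThmLRSCorrect}, and the improvement over the generic bound of Theorem~\ref{ThmLRSConv} comes from (i) a sharper per-edge failure probability using the two-spin structure and the uniform resample, and (ii) lower-bounding the marginal at the non-resampled endpoint via the conditional Gibbs property, which yields the $\alpha$-fixpoint. However, two points in your sketch are off in a way that would derail the calculation if followed literally. First, the paper does \emph{not} eliminate the factor of $2$ in the dependency degree $d \le 2(\Delta-1)$; it still bounds $|E^+(\mathcal{R})|\le(2\Delta-1)\HLRes(\mathcal{R})$ (and tracks the potential $\HLRes(\mathcal{R})$, the minimum number of edges covering $\mathcal{R}$, rather than $|\mathcal{R}|$, though this is a small change). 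The gain comes entirely from the per-edge failure bound: for an internal edge the expectation over the uniform resample gives $\Pr[e\text{ fails}]=\tfrac{1-B_e}{2}$ rather than the generic $1-B_e^2\approx 2(1-B_e)$, and for a boundary edge one obtains $\Pr[e\text{ fails}]\le\tfrac{1-B}{2}\bigl(1+\tfrac{1+B}{1+B^\Delta}\bigr)$, where the $1+B^\Delta$ in the denominator is exactly where the conditional Gibbs lower bound on $\Pr[X_v=c]\ge \frac{B^\Delta}{1+B^\Delta}$ enters. Second, the ``$2$'' in the fixpoint $\alpha=1+\frac{2}{1+\mathrm{e}^{-1/\alpha}}$ therefore comes from $1+B\to 2$ in the boundary-edge excess term $\frac{(1-B)(1+B)}{2(1+B^\Delta)}$ (itself arising from the correction factor $\kappa_e=B_e$ when the pre-configuration is on the good sign), not from ``both endpoints of a violated edge re-entering $\mathcal{R}'$.'' If you carry out the optimization expecting the dependency factor to drop to $\Delta$, you will end up with a different fixpoint and miss the threshold; with the corrected bookkeeping, multiplying $(2\Delta-1)$ against $\frac{1-B}{2}\bigl(1+\frac{1+B}{1+B^\Delta}\bigr)$, substituting $B=1-\frac{1}{\alpha\Delta+1}$, and using $(1-\frac{1}{\alpha\Delta+1})^\Delta\ge\mathrm{e}^{-1/\alpha}$ and $\frac{2\Delta-1}{2(\alpha\Delta+1)}\le\frac{1}{\alpha}$ gives exactly the contraction $1-\Theta(1/\Delta)$, completing the argument along the lines of Theorem~\ref{ThmLRSConv}.
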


\noindent
This gives the first fully dynamic Ising sampler that can handle both edge and vertex updates (addition, deletion, change of functions). 
The algorithm is parallel and in the above regime terminates within $O(\log n)$ rounds with high probability.

This bound is asymptotically tight.
In the ``non-uniqueness regime'' where $\mathrm{e}^{-2|\beta|}<1-\frac{2}{\Delta}$, for the anti-ferromagnetic Ising model, even static and approximate sampling is intractable~\cite{galanis2016inapproximability}; and for the ferromagnetic Ising model, by an argument as in~\cite{feng2017sampling} there cannot exist such local and parallel sampling algorithms (even for static and approximate sampling) due to the reconstructibility of the ferromagnetic Ising model in the non-uniqueness regime on locally tree-like graphs~\cite{dembo2010ising, gerschcnfeld7reconstruction}.

The Ising model is a major subject for static sampling. 
Previously, the famous Jerrum-Sinclair chain on even subgraphs~\cite{fill2000randomness} gives a poly-time approximate sampler for the ferro-Ising model. The same can also be obtained by a recent result of Guo and Jerrum~\cite{guo2018random} for the rapid mixing of the random cluster model, which combined with the \concept{coupling from the past} (CFTP) of Propp and Wilson~\cite{propp1996exact} also gives a poly-time perfect sampler for the ferro-Ising model. These Ising samplers work in the entire ferromagnetic regime, but require global translations of the probability space and has large polynomial running times.
For local algorithms, the rapid mixing result of Mossel and Sly~\cite{mossel2013exact} for the \concept{Glauber dynamics} (a local Markov chain, also known as \concept{heat bath} or \concept{Gibbs sampler}) gives a $O(n\log n)$-time static and approximate sampler for the ferro-Ising model  in the uniqueness regime where $\beta>0$ and $\mathrm{e}^{-2\beta}>1-\frac{2}{\Delta}$.
For Las Vegas samplers, local algorithms such as the random recycler of Fill and Huber~\cite{fill2000randomness} and the bounding chain of Huber~\cite{huber2004perfect} give linear- or near-linear time Las Vegas perfect samplers for the (ferro- or anti-ferro-) Ising model in regimes with the form $\mathrm{e}^{-2|\beta|}>1-O(\frac{1}{\Delta})$.
All these algorithms are non-parallel and none of them can deal with fully dynamic updates of edges and vertices (addition, deletion, change of functions).

The \concept{Potts model} is a generalization of the Ising model to non-Boolean states. Each instance of the Potts model is the same as Ising model. The Gibbs distribution is now defined over all configurations $\sigma\in[q]^V$, where $q\ge 2$ gives the number of spin states, such that
\[
\mu(\sigma)\propto\prod_{e=(u,v)\in E}\exp(\beta_e \cdot (2\delta(\sigma_u,\sigma_v)-1)),
\]
where $\delta(\cdot,\cdot)$ is the Kronecker delta.

For the Potts models, the same bounds as in Theorem~\ref{ThmIsing} hold.

\paragraph{The hardcore model.}
In the \concept{hardcore model} on graph $G=(V,E)$, each vertex $v\in V$ is associated with a \concept{fugacity} $\lambda_v>0$.
The Gibbs distribution is defined over all independent sets $I$ of graph $G$ as $\mu(I)\propto \prod_{v\in I}\lambda_v$. We call a $v\in I$ \concept{occupied} and a $v\not\in I$ \concept{unoccupied}.

\begin{theorem}
\label{ThmHardcore}
Assuming  $\lambda_v\leq\frac{1}{\sqrt{2}\Delta-1}$ for all $v\in V$, 
there is a dynamic sampling algorithm for the hardcore model on graphs with maximum degree $\Delta=O(1)$, that upon updates of $k$ edges and vertices, returns a perfect sample within $O(\log k)$ rounds and $O(k)$ incremental costs in expectation.
\end{theorem}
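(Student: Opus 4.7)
The strategy is to realize the hardcore model as a graphical model with hard constraints and analyze the resampling chain $\MC{M}_{\LRes}$ directly, since Theorem~\ref{ThmLRSConv} does not apply when $B_e = 0$. Take variables $V$ with domain $\{0,1\}$, set $\phi_v(1) = \lambda_v/(1+\lambda_v)$ and $\phi_v(0) = 1/(1+\lambda_v)$, and for each edge $e = \{u,v\}$ put $\phi_e(1,1) = 0$ and $\phi_e \equiv 1$ elsewhere; the induced Gibbs measure is the hardcore distribution. Correctness then follows from Theorem~\ref{ThmLRSCorrect}, so the task reduces to bounding the expected iteration count and the total number of resamplings.

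The central estimate is a one-step contraction, leveraging the conditional Gibbs property preserved by $\MC{M}_{\LRes}$. Starting from $(X,\mathcal{R})$ that is conditionally Gibbs, every edge $e \in E^+(\mathcal{R})$ has $\kappa_e = 1$ (the hard constraint evaluates to $1$ on the feasible current configuration), so $F_e = 1$ with probability $\Pr[\text{both endpoints of $e$ end up occupied}]$. For an internal edge this probability factors as $\tfrac{\lambda_u}{1+\lambda_u}\cdot\tfrac{\lambda_v}{1+\lambda_v}$ by independence of the resamples; for a boundary edge with $u \in \mathcal{R}$ and $v \in \mathcal{S}$, it equals $\tfrac{\lambda_u}{1+\lambda_u}\cdot \mgn{\mu}_{\mathcal{S}}^{X_{\mathcal{R}}}(X_v = 1)$, and a standard stochastic-domination argument for the hardcore measure gives $\mgn{\mu}_{\mathcal{S}}^{\tau}(X_v = 1) \leq \tfrac{\lambda_v}{1+\lambda_v}$ uniformly in $\tau$.

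The hypothesis $\lambda_v \leq \frac{1}{\sqrt{2}\Delta-1}$ is equivalent to $\tfrac{\lambda_v}{1+\lambda_v} \leq \tfrac{1}{\sqrt{2}\Delta}$, so every edge in $E^+(\mathcal{R})$ is violated with probability at most $\tfrac{1}{2\Delta^2}$. With $|E^+(\mathcal{R})| \leq \Delta |\mathcal{R}|$ and each violated edge contributing at most two vertices to $\mathcal{R}'$, we obtain
\[
\mathbb{E}\bigl[|\mathcal{R}'| \,\bigm|\, X,\mathcal{R}\bigr] \;\leq\; 2\Delta|\mathcal{R}|\cdot\tfrac{1}{2\Delta^2} \;=\; \tfrac{|\mathcal{R}|}{\Delta}.
\]
For $\Delta \geq 2$ this is geometric decay at rate $\leq 1/2$. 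Iterating (conditional Gibbs survives each step) yields $\mathbb{E}[|\mathcal{R}_t|] \leq |\mathcal{R}_0|\Delta^{-t}$ with $|\mathcal{R}_0| = |\vbl{D}| = O(k)$, so $O(\log k)$ rounds suffice in expectation, and summing the per-round cost $O(\Delta |\mathcal{R}_t|)$ as a geometric series yields $O(k)$ total resamplings.

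The principal obstacle I expect is justifying the uniform marginal bound $\mgn{\mu}_{\mathcal{S}}^{\tau}(X_v = 1) \leq \tfrac{\lambda_v}{1+\lambda_v}$ within the extended conditional measure of~\eqref{eq:conditional-margin-dist}, because $\tau$ may locally violate hard constraints outside $\mathcal{S}$. One inspects $w_S^\tau$ directly to see that forbidden neighbors only depress the occupation probability of $v$, recovering the bound by coupling. The remaining points, namely the passage from expectation to high-probability $O(\log k)$ rounds via a supermartingale argument, and adaptation to the locally adaptive adversary of Remark~\ref{remark:adversary}, are routine.
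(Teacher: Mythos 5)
Your proposal has a genuine gap in the computation of $\kappa_e$.

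You claim that for every $e\in E^+(\mathcal{R})$ the correction factor is $\kappa_e=1$ ``because the hard constraint evaluates to $1$ on the feasible current configuration.'' That is not what the definition says. Recall that
\[
\kappa_e=\frac{1}{\phi_e(X_e)}\min_{x\in[q]^e:\,x_{e\cap\mathcal{R}}=X_{e\cap\mathcal{R}}}\phi_e(x).
\]
For an \emph{internal} edge $e\subseteq\mathcal{R}$ the min is over the singleton $\{X_e\}$, so $\kappa_e=1$. But for a \emph{boundary} edge $e=(u,v)$ with $u\in\mathcal{R}$ and $v\notin\mathcal{R}$ the min ranges over both values of $x_v$. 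If $X_u=1$ before resampling, then $x=(1,1)$ is in the range, $\min_x\phi_e(x)=0$, and $\kappa_e=0$ (since $\phi_e(X_e)=1$ on the feasible configuration $(1,0)$). Consequently $\Pr[F_e=0]=\kappa_e\phi_e(X'_e)=0$, i.e.\ $F_e=1$ \emph{deterministically}, and both endpoints of every such edge join $\mathcal{R}'$. If $\mathcal{R}$ consists of a single occupied vertex of degree $\Delta$, then $|\mathcal{R}'|\ge\Delta+1$ with probability one, so your one-step contraction $\E{|\mathcal{R}'|}\le |\mathcal{R}|/\Delta$ cannot hold, and the geometric decay argument collapses. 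The error is in the sentence ``so $F_e=1$ with probability $\Pr[\text{both endpoints occupied}]$,'' which is only valid when $\kappa_e=1$.

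This is exactly why the paper does not run the basic \Resample{} for the hardcore model. Instead it instantiates the meta-algorithm \GenResample{} with the $\Expand$ subroutine in~\eqref{eq-Expand-hardcore}, which first adds to the resample set every neighbor of an occupied vertex of $\mathcal{R}$. After this expansion, every boundary edge $(u,v)$ of $\mathcal{R}''$ with $u\in\mathcal{R}''$ necessarily has $X_u=0$, so indeed $\kappa_e=1$ holds for all $e\in E^+(\mathcal{R}'')$ and your occupation-probability computation becomes valid. One then has to (i) verify separately that the chain $\MExp$ induced by this $\Expand$ satisfies the equilibrium condition and invoke Theorem~\ref{ThmGeneralLRSCorrect} rather than Theorem~\ref{ThmLRSCorrect}, and (ii) account for the $\Theta(\Delta)$ blow-up that expansion causes before measuring the contraction; the paper does this with the edge-count potential $\HHC(\mathcal{R})=|E(\mathcal{R})|$, which absorbs the factor $(2\Delta-1)\Delta$ and is precisely where the $\sqrt{2}\Delta-1$ threshold comes from. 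Your marginal bound $\mgn{\mu}_{\mathcal{S}}^{\tau}(X_v=1)\le\lambda_v/(1+\lambda_v)$ is correct and, once $\Expand$ has forced the boundary of $\mathcal{R}''$ to be unoccupied, it follows directly from the conditional Gibbs property restricted to $G[V\setminus\mathcal{R}'']$ without any coupling subtleties about infeasible boundary conditions.
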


\noindent
This gives the first fully dynamic hardcore sampler that can handle both edge and vertex updates (addition, deletion, change of functions).
The algorithm is also parallel and in the above regime terminates within $O(\log n)$ rounds with high probability.

Our bound for the hardcore model is also asymptotically tight.
There is a critical threshold $\lambda_c(\Delta)=\frac{(\Delta-1)^{\Delta-1}}{(\Delta-2)^\Delta}\approx\frac{\mathrm{e}}{\Delta-2}$ known as the ``uniqueness threshold'' such that when $\lambda>\lambda_c(\Delta)$ even static and approximate sampling is intractable~\cite{galanis2016inapproximability}.

Previously, the Glauber dynamics is known to be rapidly mixing for the hardcore model with $\lambda<\lambda_c(\Delta)$ on amenable graphs~\cite{goldberg2005strong, weitz2006counting} as well as graphs with large girth and degree~\cite{efthymiou2016convergence}, and also with $\lambda \leq \frac{2}{\Delta - 2}$~\cite{vigoda1999fast, dyer2000markov} on general graphs. 
And the perfect sampling methods of~\cite{fill2000randomness, huber2004perfect} give $\tilde{O}(n)$-time perfect samplers also when $\lambda \leq \frac{2}{\Delta - 2}$.
All these algorithms are non-parallel and none of them can deal with fully dynamic updates of edges and vertices (addition, deletion, change of functions).

To achieve better convergence, our algorithm deviates slightly from Algorithm~\ref{CodeIncrLRS}, but still falls into its generalization (formally introduced in Section~\ref{sec:meta-algorithm}).
We actually show that a natural dynamic variant of the algorithm in~\cite{guo2016uniform} is always correct dynamically and improve their regime of fast convergence from $\lambda\le\frac{1}{2\sqrt{\mathrm{e}}\Delta - 1}$ to $\lambda\le\frac{1}{\sqrt{2}\Delta - 1}$.

\paragraph{Coloring.} Our algorithm is inefficient on graphical models defined by ``truly repulsive'' hard constraints, e.g.~uniform proper $q$-coloring. 
Formally, being ``truly repulsive'' means that for every constraint $e\in E$, any partial assignment $\sigma_{s}\in[q]^s$, where $s\subset e$, can be extended to a violating $\sigma_e\in[q]^e$ with $\phi_e(\sigma_e)=0$. 
For such graphical models, our algorithm is inefficient because to get a correct sample the algorithm is eventually forced to resample all variables simultaneously.
How to overcome this is left as a major open problem.

\section{Related Work}\label{sec:related-work}
The theory of Markov chain Monte Carlo (MCMC) sampling has been extensively studied in computer science, probability theory and statistics (see~\cite{levin2017markov,jerrum2003counting}). There is a substantial body of works on MCMC sampling from various graphical models, e.g.~the hardcore model~\cite{vigoda1999fast, dyer2000markov, efthymiou2016convergence}, the Ising model~\cite{jerrum1993polynomial, mossel2013exact}, and proper $q$-coloring~\cite{jerrum1995very, salas1997absence, vigoda2000improved}.

Less were known for Las Vegas perfect samplers. Some major results include the coupling from the past (CPTP) method of Propp and Wilson~\cite{propp1996exact}, Wilson's cycle-popping algorithm for uniform spanning tree~\cite{wilson1996generating}, Fill's algorithm~\cite{fill1997interruptible,fill2000extension}, the random recycler method of Fill and Huber~\cite{fill2000randomness}, Huber's bounding chain method~\cite{huber2004perfect}, and most recently the partial rejection sampling method of Guo, Jerrum, and Liu~\cite{guo2016uniform}.
See the monograph of Huber for a survey~\cite{huber2016perfect}.

The idea of resampling was used in the famous Moser-Tardos algorithm for constructing a satisfying solution to the Lov\'asz local lemma (LLL)~\cite{moser2010constructive}, followed by a line of remarkable works~\cite{haeupler2011new,Harris2013The,harvey2015algorithmic}. 
There is a profound connection between LLL and counting~\cite{moitra2017approximate, harvey2018computing}. One would expect to use the idea of resampling for sampling.
However, as observed in~\cite{harris2016new,guo2016uniform}, the Moser-Tardos algorithm does not generate uniformly distributed LLL solutions. This was fixed by the partial rejection sampling method of Guo, Jerrum and Liu~\cite{guo2016uniform} which can generate uniformly distributed LLL solutions by resampling a proper ``unblocking'' superset of violating variables. Retrospectively, Wilson's cycle-popping algorithm~\cite{wilson1996generating,guo2018tight} can be interpreted as using this method on spanning trees. Most recently, in a major breakthrough of Guo and Jerrum~\cite{guo2018simple}, a long-standing open problem in the area of approximate counting, the network reliability problem, was solved by using this method.

For sampling from a dataset, instead of from an exponential-sized space of configurations, sampling from a dynamically increasing dataset (or data stream) with small maintenance cost is a fundamental problem and is the main purpose of the classical \concept{reservoir sampling} methods~\cite{Knuth:1997:ACP:270146}.

The problem of dynamic sampling from graphical models can also be loosely seen as the sampling variant of the dynamic graph problem.
In the dynamic graph problem,  edges (constraints) are added or removed over time. 
The goal is to maintain with small  cost for each update,  a feasible solution or a locally/globally optimal solution (instead of a random solution as in sampling),  to some constraint satisfaction graph problem.
The dynamic graph problem has a rich history and is one of the major topics for algorithms and data structures. See~\cite{demetrescu2009dynamic} for a survey.

\section{Proof of Equilibrium and Correctness}\label{sec:equilibrium}
Our resampling procedure \Resample{} (Algorithm~\ref{CodeResample}) maintains a pair $(X,\mathcal{R})$ where $X\in[q]^V$ is a configuration and $\mathcal{R}\subseteq V$ is the ``resample set'' which contains the problematic variables to be resampled.
Fixed an instance $\mathcal{I}$ of graphical model, each calling of \Resample{} transforms the current pair $(X,\mathcal{R})$ to a new pair $(X', \mathcal{R}')\in[q]^V\times 2^V$ as:
\begin{align}
(X', \mathcal{R}') \gets \Resample(\mathcal{I}, X, \mathcal{R}).\label{eq:reample-set-chain}
\end{align}
Recall the Markov chain $\MC{M}_{\LRes}$ on space $[q]^V\times 2^V$, called the \concept{resampling chain}, which is defined in Definition~\ref{def:MRes}. 
The chain $\MC{M}_{\LRes}$ is defined over states $(X, \mathcal{S})$, where $\mathcal{S}\triangleq V\setminus \mathcal{R}$ stands for the ``sanity set'' that contains the non-problematic variables. 
Each transition $(X,\mathcal{S})\to (X',\mathcal{S}')$ of chain $\MC{M}_{\LRes}$ is derived from the above transition~\eqref{eq:reample-set-chain} with $\mathcal{S}= V\setminus \mathcal{R}$ and $\mathcal{S}'= V\setminus \mathcal{R}'$.

\subsection{Equilibrium conditions}
\label{EquilibriumConditions}
Recall the definition of the \concept{conditional Gibbs property} of a random pair $(X, \mathcal{S})\in [q]^V\times 2^V$, defined in Definition~\ref{DefConditionalGibbs}, which basically says that conditioning on any fixed set $\mathcal{S}=S$ and any boundary condition $X_{V\setminus S}=\tau$ specified on variables in $V\setminus S$, the distribution of $X_{S}$ is the same as the marginal Gibbs distribution $\mgn{\mu}_S^{\tau}$ on $S$ with boundary condition $\tau$ (defined as~\eqref{eq:conditional-margin-dist}).
Note that for the random $(X, \mathcal{S})$ with this property, in particular, conditioning on $\mathcal{S}=V$ the sample $X$ follows the correct Gibbs distribution $\mu_{\mathcal{I}}$.

We use $\MC{M}$ to abstractly denote Markov chains $(X,\mathcal{S})$ on space $[q]^V\times 2^V$. 
Recall the equilibrium condition stated in Condition~\ref{ConMCLV-stationary}, which says that the property of $(X,\mathcal{S})$ being conditionally Gibbs with respect to a graphical model $\mathcal{I}$ is invariant under transitions of the chain $\MC{M}$.

\paragraph{Implication to the correctness of dynamic sampling.}
Note that if Condition~\ref{ConMCLV-stationary} indeed holds for the chain $\MC{M}_{\LRes}$, then the correctness of our dynamic sampling algorithm (Algorithm~\ref{CodeIncrLRS}) stated in Theorem~\ref{ThmLRSCorrect} follows directly.
This is because for a $(X,\mathcal{S})$ with the conditional Gibbs property, whenever the algorithm stops (when $\mathcal{S}=V$), the sample $X$ follows the current correct Gibbs distribution. And with Condition~\ref{ConMCLV-stationary}, the conditional Gibbs property is invariant under resampling \Resample{}. 

It only remains to verify that the conditional Gibbs property is also invariant against updates, which is easy by the following argument.
Algorithm~\ref{CodeIncrLRS} starts with a sample $X\sim\mu_{\mathcal{I}}$ from the current graphical model $\mathcal{I}$.
And upon update $(D,\Phi_D)$ that changes $\mathcal{I}$ to a new instance $\mathcal{I}'$, in Algorithm~\ref{CodeIncrLRS} we start running the chain $\MC{M}_{\LRes}$ with the initial state $(X,V\setminus\vbl{D})$, which is obviously conditionally Gibbs with respect to $\mathcal{I}$ as $X\sim\mu_{\mathcal{I}}$ and $\vbl{D}$ is independent of $X$.
Consequently, this $(X,V\setminus\vbl{D})$ must also be conditionally Gibbs with respect to the new instance $\mathcal{I}'$, because $\mathcal{I}$ and $\mathcal{I}'$ differ only at functions $\phi_v$ and $\phi_e$ for $v,e\in D$, whose definitions do not affect whether $(X,V\setminus\vbl{D})$ is conditionally Gibbs with respect to $\mathcal{I}$ (or $\mathcal{I}'$).

Note that the above argument remains valid even when the update $(D,\Phi_D)$ is provided by a locally adaptive adversary as described in Remark~\ref{remark:adversary}, because with such adversary it still holds that the initial state $(X,V\setminus\vbl{D})$ of the chain $\MC{M}_{\LRes}$ is conditionally Gibbs with respect to $\mathcal{I}$, and the rest follows.

\paragraph{A refined equilibrium condition.}
In general, the equilibrium condition stated in Condition~\ref{ConMCLV-stationary} can still be difficult to verify.
Here we give a sufficient condition which implies Condition~\ref{ConMCLV-stationary} and is more explicit to verify.
Recall that $\mgn{\mu}_S^{\tau}$ represents the marginal Gibbs distribution on $S$ with boundary condition $\tau$, which is formally defined in~\eqref{eq:conditional-margin-dist}.

Our refined equilibrium condition is stated as follows.

\begin{condition}[\bf Refined equilibrium condition]
\label{ConMCLV-DetailedBalance}
Let $\mathcal{I} =(V, E, [q], \Phi)$ be a graphical model. 
Let $\MC{M}$ be a Markov chain on space $[q]^V \times 2^V$ with 
transition matrix $P$.
For any tuple $(S, \sigma, T, \tau)$ where $S, T \subseteq V$, $\sigma\in [q]^{V\setminus{S}}$ and $\tau \in [q]^{V\setminus{T}}$, it holds that
\begin{align*}
\forall \boldsymbol{y} \in [q]^V\text{ where }y_{V\setminus{T}}=\tau:\qquad \sum_{\boldsymbol{x} \in[q]^V \atop x_{V\setminus{S}}=\sigma }\mgn{\mu}_{S}^{\sigma}(x_S)\cdot P((x, S), (y, T)) = \mgn{\mu}_{T}^\tau(y_T)\cdot C(S, \sigma, T, \tau),
\end{align*}
where $C(S, \sigma, T, \tau)\geq 0$ is a finite constant which depends only on $(S, \sigma, T, \tau)$.
\end{condition}

\noindent
This condition describes a linear system with certain consistency requirement.
The equations in the system are grouped according to the tuples $(S, \sigma, T, \tau)$, where all equations in the same group corresponding to a $(S, \sigma, T, \tau)$ involve only those $x,y\in[q]^V$ with $x_{V\setminus S}=\sigma$ and $y_{V\setminus T}=\tau$ and has the same value on $C(S, \sigma, T, \tau)$. 
The correct resampling algorithm specified by $P$ is in fact a solution to this system.

Intuitively, this condition guarantees that the equilibrium in Condition~\ref{ConMCLV-stationary} holds in the following refined sense. 
Given a random $(X,\mathcal{S})$ that is conditionally Gibbs, conditioning on any possible $\mathcal{S}=S$ and $X_{V\setminus S}=\sigma$, after one-step transition of $\MC{M}$, the resulting state $(Y,\mathcal{T})$ (starting from the $(X,\mathcal{S})$ generated with the fixed $\mathcal{S}=S$ and $X_{V\setminus S}=\sigma$) is still conditionally Gibbs.

It is then easy to verify that Condition~\ref{ConMCLV-stationary} is implied by this refined equilibrium condition.

\begin{lemma}[\bf Sufficiency of Condition~\ref{ConMCLV-DetailedBalance} to  Condition~\ref{ConMCLV-stationary}]\label{LemRefineToMacro}
If a  Markov chain $\MC{M}$ on space $[q]^V \times 2^V$ satisfies Condition~\ref{ConMCLV-DetailedBalance}, then it satisfies Condition~\ref{ConMCLV-stationary}.
\end{lemma}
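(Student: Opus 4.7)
The plan is to verify the conditional Gibbs property of the one-step transition directly from the joint distribution, using Condition~\ref{ConMCLV-DetailedBalance} to carry out the ``inner'' sum.

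First, I would unpack what conditional Gibbs means as a product form of the joint distribution. If $(X,\mathcal{S})$ is conditionally Gibbs with respect to $\mathcal{I}$, then for every $S\subseteq V$ and $\sigma\in[q]^{V\setminus S}$ with positive mass, and every $\rho\in[q]^S$,
\[
\Pr[X=(\rho,\sigma),\mathcal{S}=S] = \pi(S,\sigma)\cdot \mgn{\mu}_S^\sigma(\rho),
\]
where $\pi(S,\sigma) \triangleq \Pr[\mathcal{S}=S,\,X_{V\setminus S}=\sigma]$. (Tuples with $\pi(S,\sigma)=0$ contribute nothing and can be ignored.) Letting $(Y,\mathcal{T})$ be the result of one transition of $\MC{M}$ with transition matrix $P$, I would compute
\[
\Pr[Y=y,\mathcal{T}=T]
= \sum_{S\subseteq V}\sum_{\sigma\in[q]^{V\setminus S}} \pi(S,\sigma)\sum_{x\in[q]^V:\,x_{V\setminus S}=\sigma}\mgn{\mu}_S^\sigma(x_S)\, P((x,S),(y,T)).
\]
Apply Condition~\ref{ConMCLV-DetailedBalance} to the innermost sum with $\tau\triangleq y_{V\setminus T}$: it equals $\mgn{\mu}_T^\tau(y_T)\cdot C(S,\sigma,T,\tau)$. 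Pulling this factor outside gives the key factorization
\[
\Pr[Y=y,\mathcal{T}=T] = \mgn{\mu}_T^\tau(y_T)\cdot K(T,\tau), \qquad K(T,\tau)\triangleq \sum_{S,\sigma}\pi(S,\sigma)\,C(S,\sigma,T,\tau),
\]
in which $K(T,\tau)$ depends only on $(T,\tau)$, not on $y_T$.

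With this factorization in hand, I would conclude by summing out $y_T$. Fix $T$ and $\tau$ with $\Pr[\mathcal{T}=T,\,Y_{V\setminus T}=\tau]>0$; then
\[
\Pr[\mathcal{T}=T,Y_{V\setminus T}=\tau] = K(T,\tau)\sum_{\rho\in[q]^T}\mgn{\mu}_T^\tau(\rho).
\]
Positivity of the left side forces both $K(T,\tau)>0$ and $\sum_{\rho}\mgn{\mu}_T^\tau(\rho)>0$, so by the convention in~\eqref{eq:conditional-margin-dist} the distribution $\mgn{\mu}_T^\tau$ is well-defined (sums to $1$). Dividing then yields
\[
\Pr[Y_T=y_T\mid \mathcal{T}=T,Y_{V\setminus T}=\tau] = \frac{\mgn{\mu}_T^\tau(y_T)\,K(T,\tau)}{K(T,\tau)} = \mgn{\mu}_T^\tau(y_T),
\]
which is exactly the conditional Gibbs property for $(Y,\mathcal{T})$.

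The argument is essentially bookkeeping once Condition~\ref{ConMCLV-DetailedBalance} is applied; the main thing to be careful about is the degenerate case where $\mgn{\mu}_S^\sigma$ or $\mgn{\mu}_T^\tau$ might fail to be a bona fide probability distribution. This is where I expect the only subtlety: I need to check that the convention $\mgn{\mu}_T^\tau\equiv 0$ (when $w_T^\tau$ sums to zero) is consistent with the factorization above, i.e.\ that whenever $(T,\tau)$ has positive posterior mass under the transition the marginal distribution is genuinely well-defined. The computation above handles this automatically because positive mass on the left forces $\sum_{\rho}\mgn{\mu}_T^\tau(\rho)>0$, and conversely tuples $(T,\tau)$ with $\mgn{\mu}_T^\tau\equiv 0$ receive zero posterior mass and need not be checked.
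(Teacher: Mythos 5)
Your proposal is correct and follows essentially the same route as the paper: expand $\Pr[Y=y,\mathcal{T}=T]$ over all $(S,\sigma)$ with positive mass, apply Condition~\ref{ConMCLV-DetailedBalance} to the inner sum, and pull out the common factor $\mgn{\mu}_T^\tau(y_T)$ so that the posterior of $Y_T$ given $(\mathcal{T},Y_{V\setminus T})=(T,\tau)$ is proportional to $\mgn{\mu}_T^\tau$. Your treatment of the degenerate case — arguing that positive posterior mass on $(T,\tau)$ forces $\sum_\rho \mgn{\mu}_T^\tau(\rho)>0$, hence that $\mgn{\mu}_T^\tau$ is a bona fide distribution as Definition~\ref{DefConditionalGibbs} requires — is slightly more explicit than the paper's, which simply notes that $C'(T,\tau)>0$; this is a welcome bit of extra care, not a different approach.
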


\begin{proof}
Let $(\boldsymbol{X}, \mathcal{S})\in[q]^V\times2^V$ be a random pair.
Consider a one-step transition of the chain $\MC{M}$ from $(\boldsymbol{X}, \mathcal{S})$ to $(\boldsymbol{Y}, \mathcal{T})$. 
We define
\begin{align*}
H \triangleq \left\{(S, \sigma)\mid S\subseteq V,\sigma \in [q]^{V\setminus{S}}, \Pr[X_{V\setminus{S}} = \sigma\wedge \mathcal{S}= S] > 0  \right\}.	
\end{align*}
Assume that $(\boldsymbol{X}, \mathcal{S})$ is conditionally Gibbs with respect to $\mathcal{I}$, which means that for any $(S, \sigma) \in H$, the marginal Gibbs distribution $\mgn{\mu}_S^\sigma$ is a well-defined probability distribution over $[q]^S$ and is also the same as the distribution of $X_{\mathcal{S}}$ conditioning on $\mathcal{S}=S$ and $X_{V\setminus{S}}=\sigma$. 
Then for any $T\subseteq V$, $\tau \in [q]^{V\setminus{T}}$, and any ${y} \in [q]^V$ that $y_{V\setminus{T}}=\tau$, we have
\begin{align*}
 \Pr[\boldsymbol{Y}=y\wedge\mathcal{T}=T]&=\sum_{(S,\sigma)\in H}\Pr[X_{V\setminus{S}}=\sigma\wedge \mathcal{S}=S]\sum_{\boldsymbol{x} \in[q]^V \atop x_{V\setminus{S}}=\sigma }\mgn{\mu}_{S}^{\sigma}(x_S)\cdot P((x, S), (y, T))\\
 &=\mgn{\mu}_{T}^\tau(y_T)\sum_{(S, \sigma) \in H}C(S, \sigma, T, \tau)\cdot \Pr[X_{V\setminus{S}}=\sigma\wedge\mathcal{S}=S]\qquad \text{(Condition~\ref{ConMCLV-DetailedBalance})} \\
 &=C'\cdot\mgn{\mu}_{T}^\tau(y_T),
\end{align*}
for some $C'=C'(T, \tau)$ which does not depend on $y_T$.
Therefore, fixed any $T\subseteq V$ and any $\tau \in [q]^{V\setminus{T}}$ which is possible for $(\boldsymbol{Y}, \mathcal{T})$ (so that $C'(T,\tau)>0$), the probability $\Pr[Y_{\mathcal{T}}=\cdot\wedge \mathcal{T}=T\wedge Y_{V\setminus{T}}=\tau]$ is proportional to $\mgn{\mu}_{T}^\tau(\cdot)$. This shows that the $Y_{\mathcal{T}}$ conditioning on any possible $\mathcal{T}=T$ and $y_{V\setminus{\mathcal{T}}}=\tau$ follows distribution $\mgn{\mu}_{T}^\tau$, which means $(\boldsymbol{Y}, \mathcal{T})$ is conditionally Gibbs with respect to $\mathcal{I}$.
\end{proof}

\subsection{Equilibrium property of resampling algorithms}
We then verify the refined equilibrium condition (Condition~\ref{ConMCLV-DetailedBalance}) on the resampling chain $\MC{M}_{\LRes}$.
By Lemma~\ref{LemRefineToMacro}, this shows that $\MC{M}_{\LRes}$ also satisfies Condition~\ref{ConMCLV-stationary}, which as discussed, implies the correctness of our dynamic sampler (Algorithm~\ref{CodeIncrLRS}), as stated in Theorem~\ref{ThmLRSCorrect}.
\begin{lemma}
\label{LemMRDetailedBalance}
The Markov chain $\MLRes$ satisfies Condition~\ref{ConMCLV-DetailedBalance}.
\end{lemma}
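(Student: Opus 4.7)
The plan is to directly compute the left-hand side of Condition~\ref{ConMCLV-DetailedBalance} for the resampling chain $\MLRes$ and show that it equals $\mgn{\mu}_T^\tau(y_T)$ times a quantity independent of $y_T$. Fix a tuple $(S,\sigma,T,\tau)$, set $\mathcal{R} := V\setminus S$ and $\mathcal{R}' := V\setminus T$, and fix any $y \in [q]^V$ with $y_{V\setminus T} = \tau$. My first observation is that the sum over $x$ collapses: since \Resample{} never modifies variables in $S$, the transition probability $P((x,S),(y,T))$ vanishes unless $x_S = y_S$, and combined with $x_{V\setminus S} = \sigma$ this pins $x$ down to $\tilde{x} := y_S \cup \sigma$. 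So the LHS equals the single term $\mgn{\mu}_S^\sigma(y_S)\cdot P((\tilde x,S),(y,T))$.

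I would then expand both factors. Using $\delta(S) = \delta(\mathcal{R})$ and $\tilde x_e = y_e$ for $e\in E(S)$,
\[
\mgn{\mu}_S^\sigma(y_S) \;=\; \frac{1}{Z_S^\sigma}\prod_{v\in S}\phi_v(y_v)\prod_{e\in E(S)}\phi_e(y_e)\prod_{e\in\delta(\mathcal{R})}\phi_e(\tilde x_e),
\]
while the rules of Algorithm~\ref{CodeResample} give
\[
P((\tilde x,S),(y,T)) \;=\; \prod_{v\in\mathcal{R}}\phi_v(y_v)\sum_{F}\prod_{e:F_e=0}\kappa_e\phi_e(y_e)\prod_{e:F_e=1}\bigl(1-\kappa_e\phi_e(y_e)\bigr),
\]
where $F$ ranges over $\{0,1\}^{E^+(\mathcal{R})}$ with $\bigcup_{e:F_e=1}e = \mathcal{R}'$. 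The key algebraic observation is that by~\eqref{eq:correction-factor} we may write $\kappa_e = \hat{\kappa}_e/\phi_e(\tilde x_e)$ with $\hat{\kappa}_e := \min_{z\in[q]^e:\,z_{e\cap\mathcal{R}}=\sigma_{e\cap\mathcal{R}}}\phi_e(z)$ depending only on $\sigma$ and $e$. Consequently, for every boundary $e\in\delta(\mathcal{R})$ both $F_e$-alternatives --- $\kappa_e\phi_e(y_e) = \hat{\kappa}_e\phi_e(y_e)/\phi_e(\tilde x_e)$ and $1-\kappa_e\phi_e(y_e) = (\phi_e(\tilde x_e)-\hat{\kappa}_e\phi_e(y_e))/\phi_e(\tilde x_e)$ --- carry a factor $1/\phi_e(\tilde x_e)$, which is exactly what is needed to cancel the $\prod_{e\in\delta(\mathcal{R})}\phi_e(\tilde x_e)$ already present in $\mgn{\mu}_S^\sigma(y_S)$. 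This cancellation is the ``magic'' that motivates the definition of $\kappa_e$ in~\eqref{eq:correction-factor}.

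The next step is to separate the $y_T$-dependent factors from quantities depending only on $(S,\sigma,T,\tau)$. Within the $F$-sum, admissibility requires $e\subseteq\mathcal{R}'$ for every $F_e=1$; hence each $e\in E^+(\mathcal{R})$ with $e\cap T\neq\emptyset$ is forced to $F_e=0$ and contributes $\hat{\kappa}_e\phi_e(y_e)$ (with $\hat{\kappa}_e=1$ for internal $e$), while every other $e\in E^+(\mathcal{R})$ satisfies $e\subseteq\mathcal{R}' = V\setminus T$, so $y_e = \tau_e$, and the surviving $\phi_e(\tilde x_e)$ residual (only for boundary $e$) also evaluates at $y_{e\cap S}=\tau_{e\cap S}$; every contribution from such $e$ is therefore a constant in $y_T$. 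Similarly, $\prod_{v\in V}\phi_v(y_v)$ splits into $\prod_{v\in T}\phi_v(y_v)\cdot\prod_{v\in V\setminus T}\phi_v(\tau_v)$, and $\prod_{e\in E(S)}\phi_e(y_e)$ splits into $\prod_{e\in E(S),\,e\cap T\neq\emptyset}\phi_e(y_e)$ times the constant $\prod_{e\in E(S),\,e\subseteq\mathcal{R}'}\phi_e(\tau_e)$.

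Finally, the disjoint decomposition $E = E(S)\sqcup E^+(\mathcal{R})$ yields
\[
\bigl(E(S)\cap E^+(T)\bigr)\;\cup\;\bigl(E^+(\mathcal{R})\cap E^+(T)\bigr) \;=\; E^+(T),
\]
so the surviving $y_T$-dependent factors of the LHS assemble to $\prod_{v\in T}\phi_v(y_v)\prod_{e\in E^+(T)}\phi_e(y_e) = Z_T^\tau\cdot\mgn{\mu}_T^\tau(y_T)$, and everything else collects into the desired nonnegative constant $C(S,\sigma,T,\tau)$. The main obstacle is the bookkeeping in the third paragraph: carefully tracking which $\phi_e(y_e)$ and $\phi_e(\tilde x_e)$ factors depend on $y_T$ as $F$ varies over admissible configurations, and correctly matching the product over constraints coming from $\mgn{\mu}_S^\sigma(y_S)$ with those from the $F$-sum via the $\phi_e(\tilde x_e)$ cancellation. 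Once the role of $\kappa_e$ in producing that cancellation is understood, the rest is routine algebra.
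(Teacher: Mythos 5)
Your proposal is correct and follows essentially the same route as the paper's proof (Lemma~\ref{LemMRDetailedBalance}, via Claims~\ref{ClaimMu} and~\ref{ClaimP}): collapse the $x$-sum to the single $\tilde x = y_S\cup\sigma$, split $E = E(S)\sqcup E^+(\mathcal{R})$, observe that admissibility forces $F_e=0$ for every $e\in E^+(\mathcal{R})\cap E^+(T)$, and cancel $\phi_e(\tilde x_e)$ against $\kappa_e$ for boundary constraints so that the $y_T$-dependent factors assemble into $\prod_{v\in T}\phi_v(y_v)\prod_{e\in E^+(T)}\phi_e(y_e)$. One minor slip: for internal $e\in E(\mathcal{R})$ it is $\kappa_e = 1$, not $\hat\kappa_e=1$ (in your own notation $\hat\kappa_e=\phi_e(\sigma_e)$ there), though the resulting factor $\phi_e(y_e)$ you use is correct.
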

\begin{proof}
Let $P$ be the transition matrix of $\MLRes$. Fix any tuple $(S, \sigma, T,\tau)$, where $S, T \subseteq V$, $\sigma\in [q]^{V\setminus{S}}$ and $\tau \in [q]^{V\setminus{T}}$, and any $y\in[q]^V$ that $y_{V\setminus{T}}=\tau$.
Condition~\ref{ConMCLV-DetailedBalance} holds if the following always holds
\begin{align}
\label{eq-final-target}
\sum_{\boldsymbol{x} \in[q]^V \atop x_{V\setminus{S}}=\sigma }\mgn{\mu}_{S}^{\sigma}(x_S)\cdot P((x, S), (y, T)) = \mgn{\mu}_{T}^\tau(y_T)\cdot C(S, \sigma, T, \tau),
\end{align}	
where $C(S, \sigma, T, \tau)\ge 0$ depends only on $(S, \sigma, T, \tau)$.

Observe that Algorithm~\ref{CodeResample} only resamples the variables in $V\setminus{S}$, thus we have
\begin{align*}
\forall x\in [q]^V: \qquad  x_S \neq y_S  \quad\Longrightarrow\quad 	P((x, S), (y, T)) = 0.
\end{align*}
Therefore, on the LHS of equation~\eqref{eq-final-target}, there is only one $x\in[q]^V$ with $x_{V\setminus S}=\sigma$ that may have  non-zero $P((x, S), (y, T))$. This $x=x(S,\sigma, y)\in[q]^V$ is uniquely determined as 
\begin{align}
\label{eq:definition-x}
x_v=\begin{cases}
\sigma_v & v\in V\setminus S, \\
y_v & v\in S. 
\end{cases}
\end{align}
We simply refer to this $x(S,\sigma, y)$ as $x$ when $S,\sigma$ and $y$ are clear in the context.

The condition in~\eqref{eq-final-target} is then simplified to
\begin{align}
\label{eq-target}
\mgn{\mu}_{S}^{\sigma}(x_S)\cdot P((x, S), (y, T)) = \mgn{\mu}_{T}^\tau(y_T)\cdot C(S, T,\sigma, \tau),
\end{align}
where $x$ is as constructed in~\eqref{eq:definition-x}.

We first calculate the $\mgn{\mu}_{S}^{\sigma}(x_S)$ that appears in the LHS of the above equation.
\begin{claim}
\label{ClaimMu}
The following holds for the $\mgn{\mu}_{S}^{\sigma}(x_S)$ in the LHS of~\eqref{eq-target}:	
\begin{align*}
\mgn{\mu}_{S}^{\sigma}(x_S) = C_1\prod_{v \in S \cap T}\phi_v(y_v) \prod_{e \in \delta(S) \cap E^+(T) }\phi_e(x_e) \prod_{e \in E(S) \cap E^+(T) }\phi_e(y_e),
\end{align*}
where $C_1 = C_1(S, \sigma, T, \tau) \geq 0$ depends only on $S, \sigma, T, \tau$.  
\end{claim}

We then observe that we can make the following assumption without loss of generality: 
\begin{align}\label{eq-situation}
\forall e\in E^+(T)\cap \delta(V\setminus S),\quad
\forall z\in[q]^e \text{ that } z_{e\setminus S}=x_{e\setminus S},\qquad 
\phi_e(z)>0.
\end{align}
If otherwise there exists a $e\in E^+(T)\cap \delta(V\setminus S)$ such that $\phi_e(y_e)=0$ for some $y_e\in[q]^e$ with $y_{e\setminus S}=x_{e\setminus S}$, then the LHS of equation~\eqref{eq-target} must have value 0, thus~\eqref{eq-target} holds trivially with $C(S, T,\sigma, \tau)=0$.
This can be verified by the following two cases. Case.1: $\phi_e\left(x_{e}\right) = 0$, where $x$ is as constructed in~\eqref{eq:definition-x}. Then immediately we have $\mgn{\mu}_{S}^{\sigma}(x_S) = 0$. Case.2: $\phi_e\left(x_{e}\right) >0$. Then in Algorithm~\ref{CodeResample} with input $(x,V\setminus S)$ we must have $\kappa_e=\min_{z\in[q]^V:z_{e\setminus S}=x_{e\setminus S}}\phi_e(z)=0$, which means $e$ must be contained in the new resample set, however we have $e\in E^+(T)$, which means $P((x,S),(y,T))=0$.

We then calculate the transition probability $P((x, S), (y, T))$ under assumption~\eqref{eq-situation}.

\begin{claim}
\label{ClaimP}
The following holds for the $P((x,S), (y, T))$ in the LHS of~\eqref{eq-target}:
\begin{align*}
P((x, S), (y, T)) = C_2\prod_{v \in T \setminus S}	\phi_v(y_v)\prod_{e \in \delta(S) \cap E^+(T ) }\frac{\phi_e(y_e)}{\phi_e(x_e)}\prod_{e \in E(V \setminus S) \cap E^+(T) }\phi_e(y_e),
\end{align*}
where $C_2 = C_2(S, \sigma, T, \tau) \geq 0$ depends only on $(S, \sigma, T, \tau)$. 
\end{claim}
\noindent
The product in above claim is well-defined and has finite value because the ratio $\frac{\phi_e(y_e)}{\phi_e(x_e)}$ has finite value assuming~\eqref{eq-situation}.

Combining~Claim~\ref{ClaimMu} and Claim~\ref{ClaimP}, we have
\begin{align*}
\mgn{\mu}_{S}^{\sigma}(x_S)\cdot P((x, S), (y, T)) 
&= C_1C_2\prod_{v \in T} \phi_v(y_v) \prod_{e \in E^+(T)}\phi_e(y_e)\\
&=C_1C_2\prod_{v \in T}\phi_v(y_v)\prod_{e \in E^+(T)}\phi_e((y_T \cup \tau)_{e}) &(\text{since }y_{V\setminus T}=\tau),
\end{align*}
which is precisely $\mgn{\mu}_{T}^\tau(y_T)\cdot C(S, \sigma, T, \tau)$ for some $C(S, \sigma, T, \tau)$ that depends only on $S, \sigma, T, \tau$. The first equation is due to $\delta(S), E(S)$ and $E(V \setminus S)$ are disjoint and $\delta(S) \cup E(S) \cup E(V \setminus S) = E$.

\end{proof}

We then prove Claim~\ref{ClaimMu} and Claim~\ref{ClaimP}.

\begin{proof}[Proof of Claim~\ref{ClaimMu}]
Recall that $y_{V \setminus T} = \tau$ and $x$ is constructed as that $x_{V \setminus S} = \sigma$ and $x_S=  y_{S}$.
By the definition of the marginal Gibbs distribution, we have
\begin{align}
\label{eq-ClaimMU-1}
\mgn{\mu}_S^\sigma(x_S)&= C_{1,1} \prod_{v \in S}\phi_v(x_v)\prod_{e \in E^+(S)}\phi_e(x_e),
\end{align}
where $C_{1,1}$ is the reciprocal of the partition function for $\mgn{\mu}_{S}^{\sigma}$, which depends only on $S, \sigma$.

Note that $S$ can be partitioned into two disjoint subsets $S\setminus T$ and $S\cap T$. Moreover, due to the construction of $x$ and $y$, we have $x_v=y_v=\tau_v$ for every $v\in S\setminus T$, and $x_v=y_v$ for every $v\in S\cap T$. 
Therefore,
\begin{align*}
\prod_{v \in S}\phi_v(x_v) = C_{1,2}\prod_{v \in S \cap T} \phi(y_v),
\end{align*}
where $C_{1,2} = \prod_{v \in S \setminus T}\phi_v(\tau_v)$ depends only on $S, T$ and  $\tau$.

Again, note that $E^+(S)$ can be partitioned into two disjoint subsets $E^+(S)\cap E(V\setminus T)$ and $E^+(S)\cap E^+(T)$ because $E(V\setminus T)$ and $E^+(T)$ are complement to each other. Moreover, for every $e\in E(V\setminus T)$, we have $x_e$ fully determined by $(S,\sigma,T,\tau)$, because $x_{V \setminus S} = \sigma$ and $x_{S \setminus  T} = \tau_{S \setminus T}$. Therefore,
\begin{align*}
\prod_{e \in E^+(S)}\phi_e(x_e) &= C_{1,3}\prod_{e \in E^+(S) \cap E^+(T)}\phi_e(x_e),
\end{align*}
where $C_{1,3} = \prod_{e \in E^+(S) \cap E(V \setminus T)}\phi_e(x_e)$ depends only on $(S, \sigma, T, \tau)$, in particular, $x_v=\sigma_v$ if $v\in V\setminus S$ and $x_v=\tau_v$ if $v\in S\setminus T$.

And the set $E^+(S) \cap E^+(T)$ can be further partitioned into two disjoint subsets $\delta(S)\cap E^+(T)$ and  $E(S)\cap E^+(T)$ because $E^+(S)=\delta(S)\uplus E(S)$. Moreover, we have $x_e=y_e$ for every $e\in E(S)\cap E^+(T)$ because $x_S=y_S$. Therefore,
\begin{align*}
\prod_{e \in E^+(S) \cap E^+(T)}\phi_e(x_e)=\prod_{e \in \delta(S) \cap E^+(T)}\phi_e(x_e)\prod_{e \in E(S) \cap E^+(T)}\phi_e(y_e).
\end{align*}

Combining everything together, we can rewrite~\eqref{eq-ClaimMU-1} as:
\begin{align*}
\mgn{\mu}_S^\sigma(x_S)&= C_1\prod_{v \in S \cap T} \phi(y_v)\prod_{e \in \delta(S) \cap E^+(T)}\phi_e(x_e)\prod_{e \in E(S) \cap E^+(T)}\phi_e(y_e), 
\end{align*}
where $C_1=C_{1,1}C_{1,2}C_{1,3}$ depends only on $(S,\sigma,T,\tau)$,
which proves the claim.
\end{proof}

\begin{proof}[Proof of Claim~\ref{ClaimP}]
Fix a tuple $(S,\sigma,T,\tau)$ and $y\in[q]^V$ satisfying $y_{V\setminus T}=\tau$. Let $x\in[q]^V$ be constructed as~\eqref{eq:definition-x}. We then calculate the transition probability $P((x,S),(y,T))$.

Recall that the definition of the chain $\MC{M}_\LRes$ in Definition~\ref{def:MRes}. Algorithm~\ref{CodeResample} takes $(x,R)$ as input where $R\triangleq V \setminus S$. For each $v\in R$ a random value $Y_v\in[q]$ is sampled independently according to the distribution $\phi_v$, and for each $v\in S=V\setminus R$, we simply assume $Y_v=x_v$. For each $e\in E^+(R)$, a random value $F_e\in\{0,1\}$ is sampled independently with $\Pr[F_e=0]=\kappa_e\phi_e(Y_e)$ where $\kappa_e\triangleq \min_{z\in[q]^e:z_{e\setminus S}=x_{e\setminus S}}\phi_e(z)/\phi_e(x_e)$ (with convention $0/0=1$). Finally a random set $\mathcal{R}'\subseteq V$ is constructed as $\mathcal{R}'=\bigcup_{e \in E^+(R) \atop F_e= 1}e$.

$P((x,S),(y,T))$ is the probability that $Y=y$ and $\mathcal{R}'=R'$ where $R'\triangleq V\setminus T$, which occurs if and only if following  events occur simultaneously.
\begin{align*}
\mathcal{A}_1:&\quad Y_R=y_R;\\
\mathcal{A}_2:&\quad \exists \mathcal{F} \subseteq E^+(R)\cap E(R'),\text{ s.t. }\vbl{\mathcal{F}}=R'\wedge(\forall e\in\mathcal{F}, F_e=1);\\
\mathcal{A}_3:&\quad \forall e\in E^+(R)\setminus E(R'), F_e=0.
\end{align*}
The first event guarantees that $Y=y$, and the other two events together guarantee that $\mathcal{R}'=R'$. Therefore, by chain rule
\[
P((x,S),(y,T))=\Pr[\mathcal{A}_1\wedge\mathcal{A}_2\wedge\mathcal{A}_3]=\Pr[\mathcal{A}_1]\cdot \Pr[\mathcal{A}_2\mid \mathcal{A}_1]\cdot\Pr[\mathcal{A}_3\mid \mathcal{A}_1\wedge\mathcal{A}_2 ].
\]
We then calculate these probabilities separately.
\begin{itemize}
\item Since each $Y_v$ for $v\in R$ is sampled independently according to distribution $\phi_v$, we have
\begin{align}
\label{eq-variable-transform}
\Pr[\mathcal{A}_1] 
&=\Pr[Y_R=y_R]\notag\\
&= \prod_{v\in R}\phi_v(y_v)\notag\\
&=\prod_{v \in R\cap R'}\phi_v(\tau_v)\prod_{v \in R\setminus R' }\phi_v(y_v) & (y_{R'}=y_{V\setminus T}=\tau)\notag\\
&=C_{2,1} \prod_{v \in T \setminus S}\phi_v(y_v), & (R=V\setminus S\text{ and }R'= V\setminus T)
\end{align}
where $C_{2,1} = C_{2,1}(S, T, \tau) = \prod_{v \in V \setminus (T \cup S)}\phi_v(\tau_v)$ depends only on $S, T$ and $\tau$.
\item Event $\mathcal{A}_2$ can be expressed as the union of a collection of disjoint events, where each disjoint event corresponds to a $\mathcal{F} \subseteq E^+(R)\cap E(R')$ with $\vbl{\mathcal{F}}=R'$ and occurs when $\mathcal{F}$ gives the precise set of $e$'s with $F_e=1$.
Then, we have
\begin{align*}
\Pr[\mathcal{A}_2 \mid \mathcal{A}_1]
&=\sum_{\mathcal{F} \subseteq E^+(R)\cap E(R') \atop \vbl{\mathcal{F}}=R'}\Pr\left[\,\forall e\in\mathcal{F}, F_e=1\wedge \forall e\in (E^+(R)\cap E(R'))\setminus \mathcal{F}, F_e=0\mid \mathcal{A}_1\,\right]\\
&=\sum_{\mathcal{F} \subseteq E^+(R)\cap E(R') \atop \vbl{\mathcal{F}}=R'}\left( \prod_{e \in \mathcal{F}} \Pr[F_e = 1 \mid \mathcal{A}_1] \prod_{e \in E^+(R)\cap E(R') \atop e\not\in \mathcal{F}}\Pr[F_e = 0 \mid \mathcal{A}_1]\right)\\
&=\sum_{\mathcal{F} \subseteq E^+(R)\cap E(R') \atop \vbl{\mathcal{F}}=R'}\left(\prod_{e \in \mathcal{F}}(1-\kappa_e\cdot\phi_e(\tau_e))\prod_{e \in E^+(R)\cap E(R') \atop e\not\in \mathcal{F}}\kappa_e\cdot\phi_e(\tau_e)\right),
\end{align*}
where the second equation is due to the conditional independence between $F_e$'s for all $e\in E^+(R)$ given $Y$, and the third equation is due to that $y_e=\tau_e$ for all $e\in E(R')$.

We then only need to verify that the $\kappa_e$'s in above formula are determined only by $(S, \sigma, T,\tau)$, which is obvious because $x_e$ for $e\in E(R')$ is determined by these, since $x_{R}=x_{V\setminus S}=\sigma$ and $x_{R'\setminus R}=x_{S\setminus T}=y_{S\setminus T}=\tau_{S\setminus T}$.
Thus, we have
\begin{align}
\label{eq-fix-event-transform}
\Pr[\mathcal{A}_2 \mid \mathcal{A}_1]  = C_{2,2},
\end{align}
for some $C_{2,2} = C_{2,2}(S, \sigma, T, \tau)$ depends only on $(S, \sigma, T, \tau)$.

\item
Given $Y$, all $F_e$'s are sampled independently. Thus, we have
\begin{align*}
\Pr[\mathcal{A}_3 \mid \mathcal{A}_1 \wedge \mathcal{A}_2] 
&=\Pr[\mathcal{A}_3 \mid \mathcal{A}_1 ]\\
&= \prod_{e \in E^+(R) \setminus E(R')}\kappa_e\cdot \phi_e(y_e) \\
&= \prod_{e \in \delta(R ) \setminus E(R' )} \kappa_e\cdot \phi_e(y_e) \prod_{e \in E(R ) \setminus E(R')} \kappa_e \cdot \phi_e(y_e)\\
&=  \prod_{e \in \delta(R ) \setminus E(R' )} \frac{ \phi_e(y_e) }{ \phi_e(x_e) }
\left(\min_{z\in[q]^e\atop z_{e\setminus S}=x_{e\setminus S}}\phi_e(z)\right)
\prod_{e \in E(R ) \setminus E(R' )} \phi_e(y_e).
\end{align*}
The second equation is due to that $E^+(R ) = E(R ) \cup \delta(R)$. 
The last equation is due to the fact that $\kappa_e = 1$ for all $e \in E(R)$. 
With assumption~\eqref{eq-situation}, the ratios in the last equation have finite values.

Note that  $R = V \setminus S$ and $R' = V \setminus T$. And also note that for any $e \in \delta(R )$, the value of $\min_{z\in[q]^e: z_{e\setminus S}=x_{e\setminus S}}\phi_e(z)$ is determined by $S$ and $\sigma$ because $x_{e\setminus S}=\sigma_{e\setminus S}$. Thus, we have 
\begin{align}
\label{eq-success-event-transform}
\Pr[\mathcal{A}_3 \mid \mathcal{A}_1 \wedge \mathcal{A}_2] &= 	C_{2,3}\prod_{e \in \delta(V \setminus S) \setminus E(V \setminus T ) }\frac{\phi_e(y_e)}{\phi_e(x_e)}\prod_{e \in E(V \setminus S) \setminus E(V \setminus T ) }\phi_e(y_e)\notag\\
&=C_{2,3}\prod_{e \in \delta(S) \cap E^+(T ) }\frac{\phi_e(y_e)}{\phi_e(x_e)}\prod_{e \in E(V \setminus S) \cap E^+(T ) }\phi_e(y_e)
\end{align}
where $C_{2,3}= \prod_{e \in \delta(V \setminus S) \setminus E(V \setminus T ) } \min_{z\in[q]^e: z_{e\setminus S}=x_{e\setminus S}}\phi_e(z)$ depends only on $S,\sigma$ and $T$. The second equation is due to $\delta(V \setminus S) = \delta(S)$ and $E(V \setminus T)=E \setminus E^+(T) $.

\end{itemize}
Combining Equations~\eqref{eq-variable-transform},\eqref{eq-fix-event-transform} and \eqref{eq-success-event-transform}, we have
\begin{align*}
P((x, S), (y, T)) = C_2\prod_{v \in T \setminus S}	\phi_v(y_v)\prod_{e \in \delta(S) \cap E^+(T ) }\frac{\phi_e(y_e)}{\phi_e(x_e)}\prod_{e \in E(V \setminus S) \cap E^+(T) }\phi_e(y_e),
\end{align*}
where $C_2=C_{2,1}C_{2,2}C_{2,3}$  depends only on $(S,\sigma,T,\tau)$. This proves the claim.
\end{proof}

\subsection{A meta algorithm for resampling}\label{sec:meta-algorithm}
Our dynamic sampler can be generalized to a general algorithmic framework for dynamic sampling.
In the following we describe a meta-algorithm for resampling called \GenResample. 
It takes as input a pair $(\boldsymbol{X},\mathcal{R})$ of configuration $\boldsymbol{X}\in [q]^V$ and a resample set $\mathcal{R}$, and returns a new random pair $(\boldsymbol{X}',\mathcal{R}')$. The pseudocode for this meta-algorithm is given in Algorithm~\ref{CodeGeneralResample}.
\begin{algorithm}[ht]
\SetKwInOut{Input}{Input}
\SetKwInOut{Output}{Output}
\Input{a graphical model $\mathcal{I}=(V, E, [q], \Phi)$,  a configuration $\boldsymbol{X}\in[q]^V$ and a $\mathcal{R}\subseteq V$;}
\Output{a new pair $(\boldsymbol{X}',\mathcal{R}')$ of configuration $\boldsymbol{X}'\in[q]^V$ and subset $\mathcal{R}'\subseteq V$;}
		   $\mathcal{R}'' \gets \Expand(\boldsymbol{X}, \mathcal{R})$\;
		$(\boldsymbol{X}',\mathcal{R}') \gets \Resample (\mathcal{I}, \boldsymbol{X}, \mathcal{R}'')$\; 
\Return{$(\boldsymbol{X}',\mathcal{R}')$.}
\caption{\GenResample($\mathcal{I}$, $\boldsymbol{X}$, $\mathcal{R}$)}\label{CodeGeneralResample}
\end{algorithm}

This general resampling procedure consists of two steps: it first expands the current resample set $\mathcal{R}$ to a superset $\mathcal{R}''$ by calling a subroutine $\Expand(\boldsymbol{X}, \mathcal{R})$, which is abstract and may be realized differently  in specific implementations; and then the new pair $(\boldsymbol{X}',\mathcal{R}')$ is constructed by calling \Resample{} (Algorithm~\ref{CodeResample}) on the current sample $\boldsymbol{X}$ and the expanded resample set $\mathcal{R}''$.

The {\em general dynamic sampling algorithm} is obtained by replacing Line~\ref{Incr-while-loop-2} of dynamic sampler (Algorithm~\ref{CodeIncrLRS})  with 
$(\boldsymbol{X},\mathcal{R})\gets\GenResample(\mathcal{I}', \boldsymbol{X}, \mathcal{R})$.

Similar to the definition of $\MC{M}_{\LRes}$ corresponding to the \Resample{} algorithm (Definition~\ref{def:MRes}), a Markov chain $\MC{M}_{\GRes}$ on space $[q]^V\times 2^V$ over states $(X, \mathcal{S})$, where $\mathcal{S}\triangleq V\setminus \mathcal{R}$ stands for the ``sanity set'', can be defined similarly as $\MC{M}_{\LRes}$, by replacing \Resample{} with \GenResample{} in Definition~\ref{def:MRes}.

We also define another chain $\MExp$ for the $\Expand$ subroutine as follows.
Each transition $(X,\mathcal{S})\to (X',\mathcal{S}')$ is given by:
\begin{align*}
\mathcal{R}' &\gets\Expand(X, V\setminus \mathcal{S});\\
\mathcal{S}' &\gets V\setminus  \mathcal{R}'; \\
X' &\gets X.
\end{align*}

The $\MC{M}_{\GRes}$ chain is a composite of the two chains $\MExp$ and $\MLRes$. Therefore, it is obvious that if $\MLRes$ and $\MExp$ both satisfy the equilibrium stated in Condition~\ref{ConMCLV-stationary}, then the composite chain $\MC{M}_{\GRes}$ also satisfies Condition~\ref{ConMCLV-stationary}, which as discussed in Section~\ref{EquilibriumConditions}, would imply the correctness of the general dynamic sampling algorithm with an abstract $\Expand$ subroutine. Meanwhile, by Lemma~\ref{LemMRDetailedBalance} and Lemma~\ref{LemRefineToMacro}, we already has Condition~\ref{ConMCLV-stationary} satisfied by the $\MLRes$ chain. Therefore, the following theorem is true.

\begin{theorem}[\bf Correctness of the general dynamic sampling algorithm]\label{ThmGeneralLRSCorrect} 
Assuming that $\MExp$ satisfies Condition~\ref{ConMCLV-stationary} and the input sample $\boldsymbol{X}\sim\mu_{\mathcal{I}}$, upon termination, the general dynamic sampling algorithm returns a perfect sample $\boldsymbol{X}'\sim\mu_{\mathcal{I}'}$.
\end{theorem}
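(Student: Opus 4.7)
The plan is to reduce the theorem to machinery already in place: by Lemma~\ref{LemMRDetailedBalance} combined with Lemma~\ref{LemRefineToMacro}, the chain $\MLRes$ satisfies the equilibrium condition (Condition~\ref{ConMCLV-stationary}), and by hypothesis $\MExp$ does too. Since each iteration of the general dynamic sampler is exactly one call of $\GenResample = \Expand$ followed by $\Resample$, one transition of $\MC{M}_{\GRes}$ factors as a transition of $\MExp$ followed by a transition of $\MLRes$. I would first establish that this composition preserves the conditional Gibbs property, and then plug the result into the termination/correctness template already used for Theorem~\ref{ThmLRSCorrect}.

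The composition step is the conceptual heart of the argument but is essentially a chaining lemma. Suppose $(X,\mathcal{S})$ is conditionally Gibbs with respect to $\mathcal{I}'$. Applying one transition of $\MExp$ yields an intermediate state $(X'',\mathcal{S}'')$; by the assumption on $\MExp$, this state is again conditionally Gibbs with respect to $\mathcal{I}'$. Applying a subsequent transition of $\MLRes$ to $(X'',\mathcal{S}'')$, by Lemma~\ref{LemMRDetailedBalance} (via Lemma~\ref{LemRefineToMacro}), yields $(X',\mathcal{S}')$ that is still conditionally Gibbs. Since Condition~\ref{ConMCLV-stationary} is a universal statement over the input distribution, invoking each hypothesis in turn shows that $\MC{M}_{\GRes}$ itself satisfies Condition~\ref{ConMCLV-stationary}. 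No further calculation is needed here.

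The remainder of the proof follows verbatim the argument already given for Theorem~\ref{ThmLRSCorrect} in Section~\ref{EquilibriumConditions}. The algorithm initializes with $(X, V\setminus\vbl{D})$ where $X\sim\mu_{\mathcal{I}}$ and $\vbl{D}$ is fixed independently of $X$. This initial state is trivially conditionally Gibbs with respect to $\mathcal{I}$, and because $\mathcal{I}$ and $\mathcal{I}'$ differ only at $\phi_a$ for $a\in D$, whose effect on $\mgn{\mu}_{V\setminus\vbl{D}}^{\tau}$ through~\eqref{eq:conditional-margin-dist} involves only vertices or constraints touching $\vbl{D}\subseteq V\setminus(V\setminus\vbl{D})$ and is therefore absorbed into the boundary condition, the same initial pair is also conditionally Gibbs with respect to $\mathcal{I}'$. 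By induction on iterations, the conditional Gibbs property with respect to $\mathcal{I}'$ is maintained throughout the execution; upon termination $\mathcal{S}=V$, so specializing Definition~\ref{DefConditionalGibbs} to $S=V$ with empty boundary condition gives that the returned $X$ is distributed as $\mu_{\mathcal{I}'}$.

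The only step one might worry about is the composition of two one-step equilibrium statements, but this is a tautology rather than a genuine obstacle: both hypotheses quantify over arbitrary conditionally-Gibbs inputs, so composing them is immediate. Thus the principal work of the proof is already done in Lemma~\ref{LemMRDetailedBalance} and the update-invariance argument from Section~\ref{EquilibriumConditions}, and the theorem is obtained by transporting those facts through the abstract $\Expand$ step provided by the hypothesis on $\MExp$.
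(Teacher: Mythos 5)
Your proposal is correct and follows essentially the same route as the paper: decompose $\MC{M}_{\GRes}$ as $\MExp$ followed by $\MLRes$, note that Condition~\ref{ConMCLV-stationary} is closed under composition because each hypothesis quantifies over all conditionally Gibbs inputs, invoke Lemma~\ref{LemMRDetailedBalance} and Lemma~\ref{LemRefineToMacro} for $\MLRes$ and the theorem's hypothesis for $\MExp$, and then reuse the initialization/induction/termination argument from the discussion preceding Theorem~\ref{ThmLRSCorrect}. One small imprecision: for the initial state, the reason the conditional Gibbs property transfers from $\mathcal{I}$ to $\mathcal{I}'$ is not that the changed functions are ``absorbed into the boundary condition,'' but that they do not appear at all in the formula~\eqref{eq:conditional-margin-dist} for $\mgn{\mu}_{V\setminus\vbl{D}}^{\tau}$ — every $v\in D\cap V$ lies outside $S=V\setminus\vbl{D}$, and every $e\in D$ (existing or newly added) satisfies $e\subseteq\vbl{D}$ and hence $e\cap S=\emptyset$. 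With that wording fixed, the argument matches the paper's.
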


\noindent
Our dynamic sampler (Algorithm~\ref{CodeIncrLRS}) is a special case of the general dynamic sampling algorithm, where the subroutine $\Expand$ is the trivial one: $\Expand(\boldsymbol{X}, \mathcal{R}) = \mathcal{R}$. It is easy to verify that the conditional Gibbs property is invariant under the chain $\MExp$ induced by this trivial $\Expand$ subroutine, thus this $\MExp$ satisfies Condition~\ref{ConMCLV-stationary}.
The correctness of Algorithm~\ref{CodeIncrLRS} stated in Thorem~\ref{ThmLRSCorrect} follows as a corollary of Theorem~\ref{ThmGeneralLRSCorrect}.

\section{Proof of Fast Convergence}
\label{sec:convergence}
We then analyze the convergence rate of our dynamic sampler on general graphical models.

We first show a contraction behavior of the \Resample{} (Algorithm~\ref{CodeResample}) under condition~\eqref{eq:converge-cond}. Fixed a graphical model $\mathcal{I}$, the resampling procedure \Resample{} takes a pair $(\boldsymbol{X}, \mathcal{R}) \in [q]^V \times 2^V $ as input and returns a pair $(\boldsymbol{X}', \mathcal{R}')  \in [q]^V \times 2^V $ as
\begin{align*}
(\boldsymbol{X}', \mathcal{R}') \gets \Resample(\boldsymbol{X}, \mathcal{R}).	
\end{align*}

We construct an integral-valued {\em potential function}  $\HLRes: 2^V \rightarrow \mathbb{Z}_{\geq 0}$ and show that there is a decay on the value of $\HLRes(\mathcal{R})$.
Let $\mathcal{R}\subseteq V$ be a set of variables. The potential function $\HLRes(\mathcal{R})$ is constructed as the minimum number of constraints that can cover all variables in $\mathcal{R}$. Formally:

\begin{definition}	
\label{DefPotentialFun}
Let $\mathcal{I} = (V, E, [q], \Phi)$ be a graphical model. The potential function $\HLRes: 2^V \rightarrow \mathbb{Z}_{\geq 0}$ is defined as
\begin{align*}
\forall \mathcal{R} \subseteq V: \qquad \HLRes(\mathcal{R}) \triangleq \min \left\{ |\mathcal{F}|:  \mathcal{F} \subseteq E \text{ and } \mathcal{R} \subseteq \bigcup_{e \in \mathcal{F}}e \right\}.
\end{align*}
\end{definition}

\noindent
Without loss of generality, we assume that each variable is incident to at least one constraint so that the above potential function is well-defined. If otherwise, the variable incident to no constraint is independent with all other variables, which can be easily handled separately by the sampler. 

In this definition of potential function, we require that $\mathcal{R} \subseteq \bigcup_{e \in \mathcal{F}}e$ rather than $\mathcal{R} = \bigcup_{e \in \mathcal{F}}e$. This is because in Line~\ref{Incr-initial-R} of Algorithm~\ref{CodeIncrLRS}, the resampling set is constructed as $\mathcal{R} = \vbl{D}$, where $D \subseteq V \cup E$. Since $D$ is allowed to contain some variables in $V$ (for example, $D$ may only contain a single variable), then there may not exist a subset $\mathcal{F} \subseteq E$ such that $\vbl{D} = \bigcup_{e \in \mathcal{F}}e$.

\begin{lemma}
\label{LemStepDecay}
Assume that condition~\eqref{eq:converge-cond} holds for the input graphical model $\mathcal{I}$. Given any $(\boldsymbol{X}, \mathcal{R}) \in [q]^V \times 2^V $, the following holds for the $\mathcal{R}'\subseteq V$ returned by Algorithm~\ref{CodeResample} on the input $(\boldsymbol{X}, \mathcal{R})$:
\begin{align*}
\E{\HLRes(\mathcal{R}')} \leq (1-\delta)\HLRes(\mathcal{R}).
\end{align*}
\end{lemma}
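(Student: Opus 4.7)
The plan is to bound $\HLRes(\mathcal{R}')$ by counting constraints $e$ with $F_e=1$, and then use the condition $B_e^2 \ge 1-\tfrac{1-\delta}{d+1}$ to make each such constraint contribute little in expectation.

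First I would observe that by construction in Line~\ref{LRS-construct-R} of Algorithm~\ref{CodeResample}, $\mathcal{R}'=\bigcup_{e\in E^+(\mathcal{R}):\,F_e=1}e$, so the family $\mathcal{F}'\triangleq\{e\in E^+(\mathcal{R}):F_e=1\}$ is a cover of $\mathcal{R}'$ by constraints. Therefore
\[
\HLRes(\mathcal{R}')\ \le\ |\mathcal{F}'|\ =\ \sum_{e\in E^+(\mathcal{R})}\mathbbm{1}[F_e=1],
\]
and by linearity of expectation it suffices to upper bound $\sum_{e\in E^+(\mathcal{R})}\Pr[F_e=1]$.

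Next I would relate $|E^+(\mathcal{R})|$ to $\HLRes(\mathcal{R})$. Let $\mathcal{F}^\star\subseteq E$ achieve the minimum in Definition~\ref{DefPotentialFun}, so $|\mathcal{F}^\star|=\HLRes(\mathcal{R})$ and $\mathcal{R}\subseteq\bigcup_{f\in\mathcal{F}^\star}f$. Any $e\in E^+(\mathcal{R})$ shares a variable with $\mathcal{R}$, hence with some $f\in\mathcal{F}^\star$, so either $e=f$ or $e\in\Gamma(f)$. Thus
\[
|E^+(\mathcal{R})|\ \le\ \sum_{f\in\mathcal{F}^\star}(1+|\Gamma(f)|)\ \le\ (d+1)\,\HLRes(\mathcal{R}).
\]

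For the per-constraint probability bound, I would analyze $\kappa_e$ and $\phi_e(X_e)$ separately. Write $X^{\text{old}}$ for the pre-resample configuration (used to compute $\kappa_e$ at Line~\ref{LRS-compute-kappa}) and $X$ for the post-resample configuration (used at Line~\ref{LRS-remove-step}). Since $\phi_e$ takes values in $[B_e,1]$,
\[
\kappa_e\ =\ \frac{\min_{x:\,x_{e\cap\mathcal{R}}=X^{\text{old}}_{e\cap\mathcal{R}}}\phi_e(x)}{\phi_e(X^{\text{old}}_e)}\ \ge\ \frac{B_e}{1}\ =\ B_e,
\]
and $\phi_e(X_e)\ge B_e$, so
\[
\Pr[F_e=1\mid X]\ =\ 1-\kappa_e\,\phi_e(X_e)\ \le\ 1-B_e^2\ \le\ \frac{1-\delta}{d+1},
\]
using condition~\eqref{eq:converge-cond} in the last step. (In the edge case $\phi_e(X^{\text{old}}_e)=0$ the convention $0/0=1$ combined with $\phi_e(X_e)\ge B_e$ makes the bound still valid.)

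Combining the three inequalities gives
\[
\E{\HLRes(\mathcal{R}')}\ \le\ \sum_{e\in E^+(\mathcal{R})}\Pr[F_e=1]\ \le\ (d+1)\,\HLRes(\mathcal{R})\cdot\frac{1-\delta}{d+1}\ =\ (1-\delta)\,\HLRes(\mathcal{R}),
\]
which is the desired contraction. The only mildly delicate point is the bound $\kappa_e\ge B_e$ (and that the formula for $\kappa_e\phi_e(X_e)$ does not collapse when the pre-resample value vanishes); everything else is a direct union/double-counting argument, so I do not anticipate a serious obstacle here.
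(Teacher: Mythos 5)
Your proof is correct and follows essentially the same approach as the paper's: bound $\HLRes(\mathcal{R}')$ by the number of failed constraints in $E^+(\mathcal{R})$, bound each failure probability by $1-B_e^2$ using $\kappa_e\geq B_e$ and $\phi_e(X_e')\geq B_e$, and control $|E^+(\mathcal{R})|$ via an optimal cover $\mathcal{F}^\star$ of size $\HLRes(\mathcal{R})$ together with the dependency-graph degree bound. The only cosmetic difference is that the paper decomposes the sum over $E^+(\mathcal{R})$ along the cover before applying the uniform bound, whereas you apply the uniform probability bound and the uniform cardinality bound $|E^+(\mathcal{R})|\leq (d+1)\HLRes(\mathcal{R})$ separately and multiply; both yield $(1-\delta)\HLRes(\mathcal{R})$. (Your side remark about the $0/0$ edge case is harmless but vacuous here, since condition~\eqref{eq:converge-cond} forces $B_e>0$ and hence $\phi_e$ never vanishes.)
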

\begin{proof}
In Algorithm~\ref{CodeResample}, the random configuration $X'$ is obtained by sampling $X'_v$ independently for all $v \in \mathcal{R}$ and setting $X'_v = X_v$ for all $v \not\in \mathcal{R}$. 
Then each constraint $e \in E^+(\mathcal{R})$ resamples $F_e \in \{0,1\}$ with $\Pr[F_e = 0] = \kappa_e \cdot \phi_e(X'_e)$. Since $\phi_e: [q]^e \rightarrow [B_e, 1]$, then we have
\begin{align}
\label{eq:Pr_Fe}	
\Pr[F_e = 1] \leq 1 - B_e^2.
\end{align}
The set $\mathcal{R}'$ returned by Algorithm~\ref{CodeResample} is constructed by
$
\mathcal{R}' = \bigcup_{e \in E: F_e = 1}e = \bigcup_{e \in E^+(\mathcal{R}): F_e = 1}e.	
$
By the definition of the potential function, $\HLRes(\mathcal{R}')$ is at most the number constraint $e \in E^+(\mathcal{R})$ such that $F_e= 1$.
 By the linearity of expectation and inequality~\eqref{eq:Pr_Fe}, we have
\begin{align}
\label{eq-bound-H-R'}
\E{\HLRes(\mathcal{R}')} &\leq \sum_{e \in E^+(\mathcal{R})}\Pr[F_e = 1] \leq   \sum_{e \in E^+(\mathcal{R})}(1 - B_e^2).
\end{align}

Let $\mathcal{F} \subseteq E$ be the subset of constraints such that $\mathcal{R} \subseteq \bigcup_{e \in \mathcal{F}}e$ and $\HLRes(\mathcal{R}) = |\mathcal{F}|$. If the choice of $\mathcal{F}$ is not unique, we pick an arbitrary one. 
Since $\mathcal{R} \subseteq \bigcup_{e \in \mathcal{F}}e$, then we have
$
E^+(\mathcal{R}) \subseteq \left(\bigcup_{e \in \mathcal{F}}\Gamma(e)\right) \cup \mathcal{F}.
$
Combining with inequality~\eqref{eq-bound-H-R'}, it holds that
\begin{align}
\label{eq-H-upper-bound}
\E{\HLRes(\mathcal{R}')} &\leq \sum_{e \in \mathcal{F}}\left(1 - B_e^2 + \sum_{f \in \Gamma(e)}(1 - B_f^2)\right).
\end{align}
By condition~\eqref{eq:converge-cond}, it holds that
\begin{align*}
\forall e \in E: \qquad B_e^2 \geq 1 - \frac{1 -\delta}{1+d},
\end{align*}
where $d = \max_{e \in E}|\Gamma(e)|$ is the maximum degree of dependency graph. Thus
\begin{align*}
\E{\HLRes(\mathcal{R}')} \leq \sum_{e \in \mathcal{F}}\left(\frac{1-\delta}{1 +d} + \sum_{f \in \Gamma(e)}\frac{1-\delta}{1+d} \right) \leq 
 (1 - \delta)|\mathcal{F}| = (1 - \delta)\HLRes(\mathcal{R}).
\end{align*}
\end{proof}

\newcommand{\LemmaFirstStepDecay}{\color{blue}
\begin{lemma}
\label{LemFirstStepDecay}
If the input graphical model $\mathcal{I}$ satisfies condition~\eqref{eq:converge-cond}, and assuming $\mathcal{F}\subseteq V \cup E$, then the following holds for the $\mathcal{F}'\subseteq E$ returned by Algorithm~\ref{CodeResample}: 
\begin{align*}
\E{|\mathcal{F}'|} \leq 2|\mathcal{F}|.	
\end{align*}
\end{lemma}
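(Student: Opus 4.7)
The plan is to read $\mathcal{F}'$ as the set of ``failed'' constraints $\{e\in E:F_e=1\}$ materialized inside Algorithm~\ref{CodeResample}—this is the natural cover of the output resample set, since $\mathcal{R}'=\bigcup_{e\in\mathcal{F}'}e$ by Line~\ref{LRS-construct-R}. The input to \Resample{} is the resample set $\mathcal{R}=\vbl{\mathcal{F}}$. Under this reading, the argument essentially parallels the first half of the proof of Lemma~\ref{LemStepDecay}, the only new ingredient being a reduction from the ``hybrid'' set $\mathcal{F}\subseteq V\cup E$ to a genuine constraint cover.

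First I would write $\E{|\mathcal{F}'|}=\sum_{e\in E^+(\mathcal{R})}\Pr[F_e=1]\le\sum_{e\in E^+(\mathcal{R})}(1-B_e^2)$, using $\Pr[F_e=0]=\kappa_e\phi_e(X'_e)\ge B_e^2$ (since $\phi_e\ge B_e$ and $\kappa_e\ge B_e/\phi_e(X_e)\ge B_e$). Next I build a witness cover of $\mathcal{R}$ by constraints whose cardinality is at most $|\mathcal{F}|$: split $\mathcal{F}$ into $\mathcal{F}_V=\mathcal{F}\cap V$ and $\mathcal{F}_E=\mathcal{F}\cap E$, and for each $v\in\mathcal{F}_V$ pick an arbitrary constraint $e_v\in E$ containing $v$ (this uses the standing assumption that every variable sits in some constraint, already invoked to make $\HLRes$ well-defined). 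Set $\mathcal{F}^\star=\mathcal{F}_E\cup\{e_v:v\in\mathcal{F}_V\}$. Then $\vbl{\mathcal{F}}\subseteq\bigcup_{e\in\mathcal{F}^\star}e$ and $|\mathcal{F}^\star|\le|\mathcal{F}|$, whence
\begin{align*}
E^+(\mathcal{R})\subseteq\bigcup_{e\in\mathcal{F}^\star}\bigl(\{e\}\cup\Gamma(e)\bigr),\qquad |E^+(\mathcal{R})|\le(d+1)\,|\mathcal{F}^\star|.
\end{align*}

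Combining these pieces with condition~\eqref{eq:converge-cond}, which gives $1-B_e^2\le(1-\delta)/(d+1)$ for every $e\in E$, yields
\begin{align*}
\E{|\mathcal{F}'|}\le(d+1)\,|\mathcal{F}^\star|\cdot\frac{1-\delta}{d+1}=(1-\delta)|\mathcal{F}^\star|\le(1-\delta)|\mathcal{F}|\le 2|\mathcal{F}|.
\end{align*}
So one actually gets the strictly stronger bound $(1-\delta)|\mathcal{F}|$; the looser factor of $2$ in the statement is presumably kept because this lemma only serves to seed the iteration, after which Lemma~\ref{LemStepDecay} drives the geometric decay of $\HLRes(\mathcal{R})$. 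There is no real obstacle in the argument; the one point that requires care is the cover construction, which genuinely uses the convention that no variable is isolated from the constraints, for otherwise the elements of $\mathcal{F}\cap V$ cannot be ``lifted'' to constraints and the first inequality in the chain above would fail.
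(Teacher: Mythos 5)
Your proof is correct and takes a genuinely different route from the paper's, yielding in fact the strictly stronger conclusion $\E{|\mathcal{F}'|}\leq(1-\delta)|\mathcal{F}|$. The paper's proof keeps the hybrid set $\mathcal{F}\subseteq V\cup E$ intact and splits the failure sum by origin: the constraints $e\in\mathcal{F}\cap E$ are internal to $\mathcal{R}=\vbl{\mathcal{F}}$, so $\kappa_e=1$ and they fail with probability at most $1-B_e$; the remaining constraints are bounded by $1-B_f^2$; then the paper charges them to the elements of $\mathcal{F}$ via $\Gamma(\mathcal{F})\subseteq\bigcup_{z\in\mathcal{F}}\Gamma(z)$. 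For $e\in\mathcal{F}\cap E$ the resulting per-element charge is at most $1-\delta$ (as in Lemma~\ref{LemStepDecay}), but for a bare vertex $v\in\mathcal{F}\cap V$ the paper proves a separate auxiliary inequality $\sum_{f\in\Gamma(v)}(1-B_f^2)\le 2$ (by picking any $e\ni v$, noting $\Gamma(v)\subseteq\Gamma(e)\cup\{e\}$, and crudely bounding $B_e-\delta+1-B_e^2\le 2$), and the factor of $2$ in the lemma's statement is exactly that per-vertex slack. Your approach instead preprocesses away the mixed type: you lift each $v\in\mathcal{F}\cap V$ to a covering constraint $e_v$ (using the standing convention that no variable is isolated), obtaining a pure cover $\mathcal{F}^\star\subseteq E$ with $\vbl{\mathcal{F}}\subseteq\bigcup_{e\in\mathcal{F}^\star}e$ and $|\mathcal{F}^\star|\le|\mathcal{F}|$, and then run the identical counting argument as in Lemma~\ref{LemStepDecay} with the uniform bound $\Pr[F_e=1]\le1-B_e^2$ and $|E^+(\mathcal{R})|\le(d+1)|\mathcal{F}^\star|$. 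This is cleaner, avoids the ad hoc per-vertex inequality, and shows the factor $2$ in the paper's statement is an artifact of a loose estimate rather than anything intrinsic (indeed, if the paper had bounded $\sum_{f\in\Gamma(v)}(1-B_f^2)\le(d+1)\cdot\frac{1-\delta}{d+1}=1-\delta$ it would recover your constant). Both proofs genuinely rely on the no-isolated-variable assumption at the same point, as you note.
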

\begin{proof}
Recall that if the input graphical model $\mathcal{I}$ satisfies condition~\eqref{eq:converge-cond}, then it must hold that
\begin{align*}
\forall e\in E,\quad \sum_{f\in\Gamma(e)}(1-B_{f}^2)\le B_e-\delta.	
\end{align*}
For all vertices $v \in V$, we prove following inequality
\begin{align}
\label{eq-condition-vertex}
\forall v\in V:\quad \sum_{f \in \Gamma(v)}(1 - B_f^2) \leq 2.	
\end{align}
If $\Gamma(v) = \emptyset$, then inequality~\eqref{eq-condition-vertex} holds trivially. If otherwise, we pick an arbitrary edge $e \in \Gamma(v)$. Since $v \in e$, then it must hold that $\Gamma(v) \subseteq \Gamma(e) \cup \{e\}$. By condition~\eqref{eq:converge-cond}, we have 
\begin{align*}
\sum_{f \in \Gamma(v)}(1 - B_f^2) \leq \sum_{f \in \Gamma(e) \cup \{e\}}(1-B_f^2) \leq B_e-\delta + 1 - B_e^2 \leq 2.	
\end{align*}
This proves inequality~\eqref{eq-condition-vertex}.

In Algorithm~\ref{CodeResample}, after resampling of all variables in $\vbl{\mathcal{F}}$, each $e\in\mathcal{F} \cap E$ joins $\mathcal{F}'$ with probability $1-\phi_e(X_e)\le 1-B_e$, and each $f\in\Gamma(\mathcal{F})$ joins $\mathcal{F}'$ with probability $1-B_f\frac{\phi_f(X_f')}{\phi_f(X_f)}\le 1-B_f^2$. By linearity of expectation, we have
\begin{align*}
\E{|\mathcal{F}'|} &= \sum_{e \in \mathcal{F} \cap E}\Pr[e \in \mathcal{F}'] + \sum_{f \in \Gamma(\mathcal{F})}\Pr[f \in \mathcal{F}']\\
&\leq  \sum_{e \in \mathcal{F} \cap E}(1-B_e) + \sum_{f \in \Gamma(\mathcal{F})}(1-B_f^2)\\
&\leq \sum_{e \in \mathcal{F} \cap E}\left(1 -B_e + \sum_{f \in \Gamma(e)}(1 - B_f^2)\right) + \sum_{v \in \mathcal{F} \cap V}\sum_{f \in \Gamma(v)}(1 - B^2_f),
\end{align*}
where the last inequality is due to the fact that $\Gamma(\mathcal{F}) \subseteq \bigcup_{z \in \mathcal{F}}\Gamma(z)$. By condition~\eqref{eq:converge-cond} and inequality~\eqref{eq-condition-vertex}, this quantity is bounded by $\sum_{e \in \mathcal{F}\cap E}(1 - \delta) + 2|\mathcal{F} \cap V| \leq 2 |\mathcal{F}|$.

\end{proof}
}

With this stepwise decay on the potentials, we can now prove the main theorem on fast convergence of the dynamic sampler.
\begin{proof}[Proof of Theorem~\ref{ThmLRSConv}]

Let $\mathcal{R}_0 = \vbl{D}$ and $\boldsymbol{X}_0=\boldsymbol{X}\sim\mu_{\mathcal{I}}$ be the initial sample. For $t\ge 1$, let 
\[
(\boldsymbol{X}_{t},\mathcal{R}_t)=\text{Resample}(\mathcal{I}', \boldsymbol{X}_{t-1}, \mathcal{R}_{t-1}).
\]
Let $T$ be the smallest integer such that $\mathcal{R}_T = \emptyset$. The Algorithm~\ref{CodeIncrLRS} terminates after $T$ iterations.
By Lemma~\ref{LemStepDecay}, for any $t\ge 1$, we have 
\begin{align*}
\E{\HLRes(\mathcal{R}_t)\mid \mathcal{R}_{t-1}}\leq (1-\delta)\HLRes(\mathcal{R}_{t-1}).
\end{align*}
Taking expectation over $\mathcal{R}_{t-1}$ on both sides gives the recurrence:
\begin{align*}
\forall t \geq 1: \qquad\E{\HLRes(\mathcal{R}_t)}\leq (1-\delta)\E{\HLRes(\mathcal{R}_{t-1})}.	
\end{align*}
For the base case, by the definition of potential function, it holds that $\HLRes(\mathcal{R}_0) \leq |D|$.
With the above recurrence, this implies that for all $t\ge 0$:
\begin{align}
\label{eq-IncrBoundFt}
\E{\HLRes(\mathcal{R}_t)} \leq |{D}|(1-\delta)^{t} \leq |{D}|\mathrm{e}^{-t\delta}.	
\end{align}
Let $\ell =\frac{1}{\delta}\ln |D|+1$. Then $|D|\mathrm{e}^{-\ell\delta} < 1$. By Definition~\ref{DefPotentialFun}, it holds that $\HLRes(\mathcal{R}) = 0$ if and only if $\mathcal{R} = \emptyset$. We then bound the expected number of iterations as:
\begin{align*}
\E{T} &= \sum_{t \geq 1} \Pr[T \geq t]\\
&=\sum_{t \geq 1} \Pr[\HLRes(\mathcal{R}_t) \geq 1]\\
&\leq \ell + \sum_{t \geq \ell}|D|\mathrm{e}^{-t\delta} \qquad (\text{Markov inequality})\\
&\leq \ell + \frac{1}{1-\mathrm{e}^{-\delta}}\\
\qquad&=O\left(\log |{D}|\right).
\end{align*}
Hence, the algorithm terminates within $O(\log |D|)$ iterations in expectation.

Recall that $d = \max_{e \in E}|\Gamma(e)|$ is the maximum degree of the dependency graph. Let $k=\max_{e\in E}|e|$ denote the maximum edge size. Then the cost of the $t$-th iteration in Algorithm~\ref{CodeIncrLRS} is at most $O(kd\HLRes(\mathcal{R}_{t-1}))$. This is because the variable subset $\mathcal{R}_{t-1}$ is at most incident to $O(d \HLRes(\mathcal{R}_{t-1}))$ constraints.
By inequality~\eqref{eq-IncrBoundFt}, we have
\begin{align*}
\sum_{t \geq 0}\E{\HLRes(\mathcal{R}_t)} \leq  \sum_{t \geq 0}|D|\mathrm{e}^{-t\delta} = O(|D|).
\end{align*}
Therefore the expected total cost is bounded by $O(kd|D|)$, which is $O(|D|)$ when  the maximum degree $d$ and the maximum constraint size $k$ are both constants.
\end{proof}

\section{Applications to Spin Systems}
\label{sec:applications}
\subsection{Ising model}

Let $\mathcal{I} = (V,E,\underline{\beta})$ be an Ising model on graph $G = (V, E)$,
where each edge $e\in E$  is associated with an \concept{inverse temperature} $\beta_e\in\mathbb{R}$.
The Gibbs distribution over all configurations $\sigma\in\{-1,+1\}^V$ is defined such that 
\[
\mu(\sigma)\propto \prod_{e=(u,v)\in E}\exp(\beta_e\sigma_u\sigma_v).
\]
The Ising models can be expressed as graphical models.
Our dynamic sampler (Algorithm~\ref{CodeIncrLRS}) instantiated on the Ising models gives us the following dynamic perfect Ising sampler (Algorithm~\ref{CodeIsing}).

\begin{algorithm}[ht]
\SetKwInOut{Input}{Input}
\SetKwInOut{Update}{Update}
\SetKwInOut{Output}{Output}
\Input{an Ising model $\mathcal{I}$ and a random sample $\boldsymbol{X}\sim\mu_{\mathcal{I}}$;}
\Update{an update of  vertices and edges in $D\subseteq V\cup{V\choose 2}$  modifying $\mathcal{I}$ to $\mathcal{I}'=(V,E,\underline{\beta})$;}
\Output{a random sample $\boldsymbol{X}\sim\mu_{\mathcal{I}'}$;}
$\mathcal{R}\gets \vbl{D}$\;
\While{
$\mathcal{R}\neq \emptyset$
}{
		all $e\in E$ start with no failure\;
		each $e=(u,v)\in \delta(\mathcal{R})$ fails independently with probability $1-\mathrm{e}^{- |\beta_e|}\cdot\mathrm{e}^{-\beta_e X_u X_v}$\;\label{Ising-fail-step-1}
		for each $v \in  \mathcal{R}$, resample $X_v\in\{-1,+1\}$ uniformly and independently\;\label{Ising-resample-step}
		each non-failed $e =(u, v) \in E^+(\mathcal{R})$ fails independently with prob.~$1-\mathrm{e}^{- |\beta_e|}\cdot\mathrm{e}^{\beta_e X_u X_v}$\;\label{Ising-fail-step-2}
		$\mathcal{R}\gets\bigcup_{\text{failed }e\in E} e$\;	\label{Ising-construct-R}
}
\Return{$\boldsymbol{X}$}\;
\caption{Dynamic Ising Sampler}\label{CodeIsing}
\end{algorithm}

For simplicity of exposition we consider the Ising model without external field. For the Ising model with local fields, at Line~\ref{Ising-resample-step} the random variables are sampled proportional to the local fields, and the algorithm still outputs from the correct Gibbs distribution.

We then prove the fast convergence of this dynamic Ising sampler.
\begin{proof}[Proof of Theorem~\ref{ThmIsing}]
Let $\boldsymbol{X} \in \{-1, +1\}^V$ be a random configuration and $\mathcal{R} \subseteq V$ be a random subset of variables such that $(\boldsymbol{X}, \mathcal{R})$ is conditionally Gibbs with respect to the Ising model $\mathcal{I}$. 
Fix one iteration  of the while loop in Algorithm~\ref{CodeIsing}.
The pair $(\boldsymbol{X}, \mathcal{R})$ is transformed to a random $(\boldsymbol{X}', \mathcal{R}')$ after this iteration. 
Assume that $\exp(-2|\beta_e|) \geq 1 - \frac{1}{\alpha\Delta + 1}$ for all edge $e \in E$, where $\alpha \approx 2.22\ldots$ is the positive root of $ \alpha = \frac{2}{1+ \exp\left(-\frac{1}{\alpha}\right)} + 1$. 
For any fixed $\mathcal{R}$, we show the following holds 
\begin{align*}
\E{\HLRes(\mathcal{R}') \mid \mathcal{R}} \leq \left(1 -\delta(\alpha,\Delta)\right)\HLRes(\mathcal{R}),
\end{align*}
where the potential function $\HLRes(\mathcal{R}) = \min \left\{ |\mathcal{F}|:  \mathcal{F} \subseteq E \text{ and } \mathcal{R} \subseteq \bigcup_{e \in \mathcal{F}}e \right\}$ is as constructed in Definition~\ref{DefPotentialFun} and $\delta(\alpha,\Delta) \triangleq \frac{1}{\alpha \left(1+\exp\left(-\frac{1}{\alpha}\right)\right)(\alpha\Delta + 1)}$.
Sine the above inequality holds for any fixed $\mathcal{R}$, then we have
\begin{align*}
\E{\HLRes(\mathcal{R}')} \leq \left(1 -\delta(\alpha,\Delta)\right)\E{\HLRes(\mathcal{R})}.
\end{align*}
Note that $\delta(\alpha, \Delta) = \Omega(1)$ since $\Delta = O(1)$. The value of the potential function decays with a constant factor in expectation.
Therefore,
Theorem~\ref{ThmIsing} can be proved by going through the same proof as Theorem~\ref{ThmLRSCorrect}.

The set $\mathcal{R}'$ is constructed by taking the union of all $e \in E^+(\mathcal{R})$ that fail in this iteration.
Note that the potential $\HLRes(\mathcal{R}')$ is at most the number of failed edges $e \in E^+(\mathcal{R})$.
 By linearity of expectation,
 \begin{align}
\label{eq-Ising-H-R'}
\E{\HLRes(\mathcal{R}') \mid \mathcal{R}} &\leq \sum_{e \in E^+(\mathcal{R})}\Pr[\, e\text{ fails}\,]
\end{align}
We then bound the probability that an edge $e \in E^+(\mathcal{R})$ fails.
Let $\beta^*$ be the maximum value of $|\beta_e|$ for all $e \in E$, which is defined as
\begin{align*}
\beta^* \triangleq \max_{e \in E}|\beta_e|	.
\end{align*}
First, consider the internal edges $e = (u, v) \in E(\mathcal{R})$.  
Note that for such internal edges, $X'_u,X'_v\in\{-1,+1\}$ are sampled uniformly at Line~\ref{Ising-resample-step} and $e$ only fails at Line~\ref{Ising-fail-step-2} with probability $1-\exp(\beta_eX'_uX'_v - |\beta_e|)$. Therefore,
\begin{align}
\label{eq-Pr-ER}
\forall e \in E(\mathcal{R}): \qquad \Pr[\, e\text{ fails} \,] = \frac{1 -\exp(-2|\beta_e|)}{2} \leq \frac{1 -\exp(-2\beta^*)}{2}.
\end{align}
Consider the boundary edges $e = (u, v) \in \delta(\mathcal{R})$ where $u \in \mathcal{R}$ and $v \not\in \mathcal{R}$. To bound the probability that $e$ fails, we give a lower bound of  $\Pr[X_v = c]$ for any $c \in \{-1,1\}$. Let $N(v) \triangleq \{u \in V \mid (u, v) \in E\}$ be the set of neighbors of vertex $v$ in graph $G$. By the chain rule, we have
\begin{align*}
\forall c \in \{-1, 1\}:	\qquad \Pr[\,X_v = c\,] &= \sum_{\sigma \in [q]^{N(v)}}\Pr[X_{N(v)} = \sigma] \Pr[X_v = c \mid X_{N(v)} =\sigma].	
\end{align*}
Note that the pair $(\boldsymbol{X}, V\setminus\mathcal{R})$ is conditionally Gibbs with respect to $\mathcal{I}$.  Then it holds that
\begin{align*}
\forall \sigma \in [q]^{N(v)}:	\qquad \Pr[X_v = c \mid X_{N(v)} = \sigma]
&=\frac{\prod_{u \in N(v) } \exp(\beta_{(u,v)}\sigma_u c)}{\prod_{u \in N(v) } \exp(\beta_{(u,v)}\sigma_u c) + \prod_{u \in N(v) } \exp(-\beta_{(u,v)}\sigma_u c)} \\
&\geq \frac{\prod_{u \in N(v)}\exp(-|\beta_{(u,v)}|)}{\prod_{u \in N(v)}\exp(-|\beta_{(u,v)}|) + \prod_{u \in N(v)}\exp(|\beta_{(u,v)}|)}\\
&\geq \frac{\exp(-\beta^*\Delta)}{\exp(-\beta^*\Delta) + \exp(\beta^*\Delta)}.
\end{align*}
Hence, for any edge $e = (u, v) \in \delta(\mathcal{R})$ where $u \in \mathcal{R}$ and $v \not\in \mathcal{R}$, we obtain the following bound
\begin{align}
\label{eq-lowerbound-X_v-c}
\forall c \in \{-1, 1\}:	\qquad \Pr[\,X_v = c\,] &\geq 	\frac{\exp(-\beta^*\Delta)}{\exp(-\beta^*\Delta) + \exp(\beta^*\Delta)} = \frac{\exp(-2\beta^*\Delta)}{1+\exp(-2\beta^*\Delta)}.
\end{align}
Note that $X'_v = X_v$ because $X_v$ is not resampled and  the edge $e= (u, v) \in \delta(\mathcal{R})$ fails if it fails either at Line~\ref{Ising-fail-step-1} or at Line~\ref{Ising-fail-step-2}. 
Given $\boldsymbol{X}$ and $\boldsymbol{X}'$ the probability that $e$ fails can be expressed as
\begin{align*}
\forall e \in \delta(\mathcal{R}): \qquad	\Pr[\, e\text{ fails}\mid \boldsymbol{X}, \boldsymbol{X}']  &= 1 -\exp\left(\beta_e X'_u X'_v - \beta_eX_uX_v- 2|\beta_e|\right)\\
&= 1 - \exp\left(\beta_e X_v \left(X'_u - X_u\right)\right) \cdot \exp(-2|\beta_e|).
\end{align*}
Since $X'_u \in \{-1, +1\}$ is sampled uniformly and independently at Line~\ref{Ising-resample-step}, we have $\Pr[X'_u = X_u] = \Pr[X'_u \neq X_u] = \frac{1}{2}$. Combining with inequality~\eqref{eq-lowerbound-X_v-c}, the probability that $e$ fails can be bounded as
\begin{align}
\label{eq-Pr-deltaR}
\forall e \in \delta(\mathcal{R}): \qquad	\Pr[\, e\text{ fails}\,] &\leq 1 - \frac{1}{2}\exp(-2|\beta_e|) 
- \frac{1}{2} \left(\frac{\exp(-2\beta^*\Delta)}{1+\exp(-2\beta^*\Delta)} + \frac{ \exp(-4|\beta_e|)}{1+\exp(-2\beta^*\Delta)}\right)\notag\\
&=\frac{1 - \exp(-2|\beta_e|)}{2}+ \frac{1 - \exp(-4|\beta_e|)}{2(1 + \exp(-2\beta^*\Delta))}\notag\\
&\leq \frac{1 - \exp(-2\beta^*)}{2} + \frac{1 - \exp(-4\beta^*)}{2(1 + \exp(-2\beta^*\Delta))}.
\end{align}
Denote $B \triangleq \exp(-2\beta^*)$. Combining inequalities~\eqref{eq-Pr-ER} and~\eqref{eq-Pr-deltaR}, we have
\begin{align}
\label{eq-Pr-E+R}
\forall e \in E^+(\mathcal{R}): \qquad 	\Pr[\, e\text{ fails}\,] \leq \frac{1 - B}{2} + \frac{1 - B^2}{2(1 + B^{\Delta})} = \frac{(1-B)}{2}\left(1 + \frac{1+B}{1+B^\Delta}\right).
\end{align}
Let $\mathcal{F} \subseteq E$ be a subset of edegs such that $\mathcal{R} \subseteq \bigcup_{e \in \mathcal{F}}e$ and $\HLRes(\mathcal{R}) = |\mathcal{F}|$. If such $\mathcal{F}$ is non-unique, we pick an arbitrary one. 
Since $\mathcal{R} \subseteq \bigcup_{e \in \mathcal{F}}e$, then we have $E^+(\mathcal{R}) \subseteq \left(\bigcup_{e \in \mathcal{F}}\Gamma(e)\right) \cup \mathcal{F}$, where $\Gamma(e)=\{f\in E\mid f\neq e\wedge f\cap e\neq\emptyset\}$ denotes the neighborhood of $e$ in the dependency graph. Hence, the size of the set $E^+(\mathcal{R})$ is at most $(2\Delta -1)|\mathcal{F}|$, because $\max_{e \in E}|\Gamma(e)| = 2(\Delta - 1)$.
Combining this fact with~\eqref{eq-Ising-H-R'} and~\eqref{eq-Pr-E+R}, we have
\begin{align*}
\E{\HLRes(\mathcal{R}')\mid \mathcal{R}} &\leq \sum_{e \in E^+(\mathcal{R})}\Pr[\,e\text{ fails}\,] \\
&\leq \frac{(1-B)}{2}\left(1 + \frac{1+B}{1+B^\Delta}\right)|E^+(\mathcal{R})| \\
&\leq \frac{(2\Delta-1)(1-B)}{2}\left(1 + \frac{1+B}{1+B^\Delta}\right)|\mathcal{F}|.
\end{align*}
By our assumption, $\exp(-2|\beta_e|) \geq 1 - \frac{1}{\alpha\Delta + 1}$ for all edge $e \in E$, where $\alpha \approx 2.22\ldots$ is the positive root of $ \alpha = \frac{2}{1+ \exp\left(-\frac{1}{\alpha}\right)} + 1$, which implies
\begin{align*}
B = \exp(-2\beta^*) = \exp\left(-2\max_{e \in E}|\beta_e| \right) \geq 1 - \frac{1}{\alpha\Delta+1}.
\end{align*}
Note that $|\mathcal{F}| = \HLRes(\mathcal{R})$. We have 
\begin{align*}
\E{\HLRes(\mathcal{R}') \mid \mathcal{R}} &\leq \frac{(2\Delta-1)(1-B)}{2}\left( 1 + \frac{1+B}{1+B^\Delta} \right)\HLRes(\mathcal{R})\\
&\leq \frac{2\Delta-1}{2(\alpha\Delta+1)}\left( 1 + \frac{2- \frac{1}{\alpha\Delta+1}}{1+\left( 1 - \frac{1}{\alpha\Delta+1} \right)^\Delta} \right)\HLRes(\mathcal{R})\\
&\leq \frac{1}{\alpha}\left(1 + \frac{2 - \frac{1}{\alpha\Delta + 1}}{1 + \exp\left(-\frac{1}{\alpha}\right)}\right)\HLRes(\mathcal{R}),
\end{align*}
where the last inequality is due to that $\frac{2\Delta-1}{2(\alpha\Delta+1)} \leq \frac{1}{\alpha}$ and $\left( 1 - \frac{1}{\alpha\Delta+1} \right)^\Delta \geq \exp\left(-\frac{1}{\alpha}\right)$, which can be verified to further equal to
\begin{align*}
&\left(\frac{1}{\alpha}\left( 1 + \frac{2}{1+ \exp\left(-\frac{1}{\alpha}\right)}  \right) - \frac{1}{\alpha \left(1+\exp\left(-\frac{1}{\alpha}\right)\right)(\alpha\Delta + 1)}\right)\HLRes(\mathcal{R})\\
= &\left(1 -\frac{1}{\alpha \left(1+\exp\left(-\frac{1}{\alpha}\right)\right)(\alpha\Delta + 1)}\right)\HLRes(\mathcal{R}),
\end{align*}
due to the definition of $\alpha$.

In conclusion, we prove the desired decay on the potential:
\begin{align*}
\E{\HLRes(\mathcal{R}') \mid \mathcal{R}} &\leq \left(1 -\frac{1}{\alpha \left(1+\exp\left(-\frac{1}{\alpha}\right)\right)(\alpha\Delta + 1)}\right)\HLRes(\mathcal{R})= \left(1 -\delta(\alpha,\Delta)\right)\HLRes(\mathcal{R}).
\end{align*}
The rest of the proof can be done by going through the same proof as Theorem~\ref{ThmLRSCorrect}.
\end{proof}

For the Potts model, the Gibbs distribution is defined over all configurations $\sigma\in[q]^V$ such that
\[
\mu(\sigma)\propto\prod_{e=(u,v)\in E}\exp(\beta_e \cdot (2\delta(\sigma_u,\sigma_v)-1)),
\]
where $\delta(\cdot,\cdot)$ is the Kronecker delta.

The dynamic Ising sampler (Algorithm~\ref{CodeIsing}) can be naturally generalized to the dynamic Potts sampler: at Line~\ref{Ising-resample-step}, each $X_v$ is now resampled from $[q]$ uniformly (or proportional to local fields if there are non-zero fields) and independently, and the failure probabilities at Line~\ref{Ising-fail-step-1} and Line~\ref{Ising-fail-step-2} are changed to $\exp(-\beta_e \cdot (2\delta(\sigma_u,\sigma_v)-1)-|\beta_e|)$ and $\exp(\beta_e \cdot (2\delta(\sigma_u,\sigma_v)-1)-|\beta_e|)$ respectively. 
It is easy to verify that this algorithm is precisely the dynamic sampler (Algorithm~\ref{CodeIncrLRS}) instantiated on the Potts model and the same bounds as in Theorem~\ref{ThmIsing} hold for this dynamic Potts sampler.

\subsection{Hardcore model}
\label{Sec:hardcore}
Let $\mathcal{I} = (V, E, \underline{\lambda})$ be a hardcore model on graph $G = (V, E)$, where each vertex $v\in V$ is associated with a fugacity $\lambda_v>0$.
The Gibbs distribution over all configurations $\sigma \in \{0, 1\}^V$ is defined as 
\begin{align*}
\mu(\sigma)\propto \begin{cases}
 \prod_{v \in I(\sigma)}\lambda_v  & \text{if $I(\sigma)$ is an independent set}\\	
 0 &\text{if $I(\sigma)$ is not an independent set},
 \end{cases}
 \end{align*}
where $I(\sigma) \triangleq \{v \in V \mid \sigma_v = 1\}$. 

The hardcore models can be expressed as graphical models. We consider the meta-algorithm for resampling \GenResample{} (Algorithm~\ref{CodeGeneralResample}), with the following $\Expand$ subroutine:
\begin{align}
\label{eq-Expand-hardcore}
\Expand(\boldsymbol{X}, \mathcal{R}) \triangleq \mathcal{R} \cup \{v \in V \setminus \mathcal{R} \mid \exists u \in \mathcal{R} \text{ s.t. } (u, v) \in E \land X_u = 1 \}.
\end{align}
On the hardcore model this is instantiated as the dynamic perfect hardcore sampler (Algorithm~\ref{CodeHardcore}).
\begin{algorithm}[ht]
\SetKwInOut{Input}{Input}
\SetKwInOut{Update}{Update}
\SetKwInOut{Output}{Output}
\Input{a hardcore model $\mathcal{I}$ and a random sample $\boldsymbol{X}\sim\mu_{\mathcal{I}}$;}
\Update{an update of vertices and edges in $D\subseteq V \cup {V\choose 2}$ modifying $\mathcal{I}$ to $\mathcal{I}'=(V,E,\underline{\lambda})$;}
\Output{a random sample $\boldsymbol{X}\sim\mu_{\mathcal{I}'}$;}
$\mathcal{R}\gets \vbl{D}$\;
\While{
$\mathcal{R}\neq \emptyset$
}{for each $v\in \mathcal{R}$ with $X_v=1$, add all neighbors of $v$ in graph $G$ into $\mathcal{R}$\;
		for each $v \in  \mathcal{R}$, resample $X_v\in\{0,1\}$ independently with $\Pr[X_v = 1] = \frac{\lambda_v}{1+\lambda_v}$\;\label{hardcore-resample-step}
		$\mathcal{R}\gets\bigcup_{e =(u, v)\in E \atop X_u = X_v = 1} e$\;	\label{hardcore-construct-R}
}
\Return{$\boldsymbol{X}$}\;
\caption{Dynamic Hardcore Sampler}\label{CodeHardcore}
\end{algorithm}

The subroutine $\Expand(\boldsymbol{X}, \mathcal{R})$ specified in~\eqref{eq-Expand-hardcore} can be implemented as following.  Each vertex $v \in \mathcal{R}$ with $X_v = 1$, in parallel, adds all of its neighbors in graph $G$ into the set $\mathcal{R}$ to obtained the expanded resample set $\mathcal{R}''=\Expand(\boldsymbol{X}, \mathcal{R})$.

 It is easy to verify that the corresponding Markov chain $\MExp$  satisfies the equilibrium condition. The resample subset $\mathcal{R}'' = \Expand(\boldsymbol{X}, \mathcal{R})$ is fully determined by $\mathcal{R}$ and $X_{\mathcal{R}}$. The subset $\mathcal{R}''$ gives no extra information about $X_{V \setminus \mathcal{R}''}$. Note that  $\mathcal{R} \subseteq \mathcal{R}''$, then $(V \setminus \mathcal{R}'') \subseteq (V \setminus \mathcal{R})$. Therefore, if the pair $(\boldsymbol{X}, V \setminus \mathcal{R})$ is conditionally Gibbs, then the pair $(\boldsymbol{X}, V \setminus \mathcal{R}'')$ must be also conditionally Gibbs. Due to Theorem~\ref{ThmGeneralLRSCorrect}, the above algorithm is correct.

In~\cite{guo2016uniform}, an algorithm is given for sampling hardcore model in a static setting, which can be expressed as our resampling meta-algorithm \GenResample{} with 
$\Expand(\boldsymbol{X}, \mathcal{R}) = \vbl{E^+(\mathcal{R})}$.
Their algorithm can be interpreted as a static version of Algorithm~\ref{CodeHardcore}.  Because after the first iteration of Algorithm~\ref{CodeHardcore}, the pair $(\boldsymbol{X}, \mathcal{R})$ must satisfy that $X_v=1$ for all $v \in \mathcal{R}$, which means the $\Expand(\boldsymbol{X}, \mathcal{R})$ in~\eqref{eq-Expand-hardcore} is exactly $\vbl{E^+(\mathcal{R})}$.

\begin{proof}[Proof of Theorem~\ref{ThmHardcore}]
For this algorithm, we define a new potential function $\HHC: 2^V \rightarrow \mathbb{Z}_{\geq 0}$ as 
\begin{align*}
\forall \mathcal{R} \subseteq V: \qquad \HHC(\mathcal{R}) \triangleq |E(\mathcal{R})|. 
\end{align*}
Let $\boldsymbol{X} \in \{0, 1\}^V$ be a configuration of hardcore model and $\mathcal{R} \subseteq V$ be a subset of vertices such that $(\boldsymbol{X}, \mathcal{R})$ is conditionally Gibbs with respect to the hardcore model $\mathcal{I}$. Fix one iteration of the while loop in Algorithm~\ref{CodeHardcore}. The pair $(\boldsymbol{X}, \mathcal{R})$ is transformed to random pair $(\boldsymbol{X}', \mathcal{R}')$ after this iteration.
Assume that $\lambda_v\leq \frac{1}{\sqrt{2}\Delta-1}$ for all $v\in V$. For any fixed $\mathcal{R}$ and any fixed $X_{\mathcal{R}}$, we show the following holds
\begin{align*}
\E{\HHC(\mathcal{R}') \mid \mathcal{R}, X_{\mathcal{R}}} \leq \left(1 - \frac{1}{2\Delta}\right)\HHC(\mathcal{R}).
\end{align*}
Since the above inequality holds for any fixed $\mathcal{R}$ and  $X_{\mathcal{R}}$, then we have
\begin{align*}
\E{\HHC(\mathcal{R}')} \leq \left(1 - \frac{1}{2\Delta}\right)\E{\HHC(\mathcal{R})}.
\end{align*}
Recall $\Delta = O(1)$. The value of the potential function decays with a constant factor in expectation.
Therefore,
Theorem~\ref{ThmHardcore} can be prove by going through the same proof as Theorem~\ref{ThmLRSCorrect}.

According to Algorithm~\ref{CodeHardcore}, any edge $(u,v) \in E$ belongs to $E(\mathcal{R}')$ if and only if the values of $X'_u$ and $X'_v$ are both resampled as 1. Thus, we have 
\begin{align*}
\forall (u,v) \in E: \qquad (u, v) \in E(\mathcal{R}') \quad\Longleftrightarrow\quad X'_u = X'_v = 1.
\end{align*}
Therefore, to bound the expectation of $\HHC(\mathcal{R}') = |E(\mathcal{R}')|$, we can bound the probability of $X'_u = X'_v = 1$ for each edge $(u,v) \in E^+(\mathcal{R}'')$, where $\mathcal{R}'' = \Expand(\boldsymbol{X}, \mathcal{R})$ is the expanded resample set. Recall $\mathcal{R}''$ depends only on $\mathcal{R}$ and $X_{\mathcal{R}}$. Hence, the set $\mathcal{R}''$ is fixed.

Consider the edge $(u, v) \in E(\mathcal{R}'')$. Since $X'_u$ and $X'_v$ are sampled independently, then we have
\begin{align}
\label{eq-ER''-Pr}
\forall (u, v) \in E(\mathcal{R}''): \qquad \Pr[(u,v) \in E(\mathcal{R}')] = \Pr[X'_u = 1 \wedge X'_v= 1 ]=\frac{\lambda_u\lambda_v}{(1 + \lambda_u)(1 + \lambda_v)}.
\end{align}
Consider the edge $(u, v) \in \delta(\mathcal{R}'')$ where $u \in \mathcal{R}''$ and $v \not\in \mathcal{R}''$. Since $X'_u$ is sampled independently and $X'_v = X_v$, then we have
\begin{align}
\label{eq-deltaR''-Pr-1}
\forall (u, v) \in \delta(\mathcal{R}'') \text{ where } u \in \mathcal{R}'' \land v \not\in \mathcal{R}'': \qquad \Pr[(u,v) \in E(\mathcal{R}')] = \frac{\lambda_u}{1 + \lambda_u}\Pr[X_v = 1]. 	
\end{align}
Note that the pair $(\boldsymbol{X}, \mathcal{R}'')$ is conditionally Gibbs with respect to $\mathcal{I}$. By the definition of $\Expand$ in~\eqref{eq-Expand-hardcore}, for any edge $(u, v) \in \delta(\mathcal{R}'')$ where $u \in \mathcal{R}''$ and $v \not\in \mathcal{R}''$, it must hold that $X_u = 0$. Thus $X_{V \setminus \mathcal{R}''}$ is a random configuration sampled from the distribution of the hardcore model on induced subgraph $G[V \setminus \mathcal{R}'']$. We denote such distribution as $\mu_{V \setminus \mathcal{R}''}$. 
For any vertex $v \in V \setminus \mathcal{R}''$, suppose $\sigma \in \{0, 1\}^{V \setminus \mathcal{R}''}$ is an independent set on subgraph $G[V \setminus \mathcal{R}'']$ with $\sigma_v = 1$. Then $\sigma' \in \{0, 1\}^{V \setminus \mathcal{R}''}$ with $\sigma'_u = \sigma_u$ for $u \in V \setminus (\mathcal{R}'' \cup \{v\})$ and $\sigma'_v = 0$ is also an independent set on subgraph $G[V \setminus \mathcal{R}'']$. Note that $\frac{\mu_{V \setminus \mathcal{R}''}(\sigma)}{\mu_{V \setminus \mathcal{R}''}(\sigma')} = \lambda_v$. We have
 $\frac{\Pr[X_v = 1]}{\Pr[X_v = 0]} \leq \frac{\lambda_v}{1}$, which implies
\begin{align}
\label{eq-deltaR''-Pr-2}
\forall v \in V \setminus \mathcal{R}'' \qquad \Pr[X_v = 1] \leq \frac{\lambda_v}{1+\lambda_v}.	
\end{align}

Let $\lambda = \max_{v \in V} \lambda_v$. Combining~\eqref{eq-ER''-Pr},~\eqref{eq-deltaR''-Pr-1} and~\eqref{eq-deltaR''-Pr-2}, we have
\begin{align*}
\E{\HHC(\mathcal{R}')\mid \mathcal{R}, X_{\mathcal{R}}} \leq \sum_{e = (u,v) \in E^+(\mathcal{R}'')}\frac{\lambda_u\lambda_v}{(1 + \lambda_u)(1 + \lambda_v)} \leq \sum_{e = (u,v) \in E^+(\mathcal{R}'')} \left( \frac{\lambda}{1+\lambda} \right)^2.
\end{align*}
By the definition of $\Expand$, it holds that
$
| E(\mathcal{R''}) | \leq (2\Delta - 1)|E(\mathcal{R})|.
$
This is because one edge at most incident to $2\Delta - 2$ edges in graph $G$, where $\Delta$ is the maximum degree of graph $G$. Similarly, the number of edges in $\delta(\mathcal{R}'')$ can be bounded as
$
| \delta(\mathcal{R}'')| \leq (2\Delta - 1)(\Delta - 1) | E(\mathcal{R}) |.
$
Thus
\begin{align*}
\E{\HHC(\mathcal{R}')\mid \mathcal{R}, X_{\mathcal{R}}} \leq \Delta(2\Delta - 1)\left( \frac{\lambda}{1+\lambda} \right)^2 \HHC(\mathcal{R}). 
\end{align*}
Since $\lambda_v \leq \frac{1}{\sqrt{2}\Delta - 1}$ for all $v \in V$, we have $\lambda \leq \frac{1}{\sqrt{2}\Delta - 1}$, which implies
\begin{align*}
\E{\HHC(\mathcal{R}')\mid \mathcal{R}, X_{\mathcal{R}}} \leq \Delta(2\Delta - 1)\left( \frac{1}{\sqrt{2}\Delta} \right)^2 \HHC(\mathcal{R}) \leq \left( 1- \frac{1}{2\Delta} \right)\HHC(\mathcal{R}). 
\end{align*}

The rest can be done by going through the same proof as Theorem~\ref{ThmLRSCorrect}.
\end{proof}

\section{Conclusion and Future Work}\label{sec:conclusion}
We give a dynamic sampling method that allows us to sample perfectly from a broad class of graphical models, while variables and constraints of the graphical model are changing dynamically.
We provide sufficient conditions under which such algorithms run incrementally in time proportional to the size of the update. 
A key to these results is to establish certain equilibrium condition satisfied by local resampling.
On specific graphical models, this equilibrium condition also helps to obtain better convergence of the algorithm.

A major open problem is to give a dynamic sampler for graphical models with ``truly repulsive'' hard constraints, e.g.~uniform proper $q$-coloring.
This requires to overcome certain barrier of the current techniques.

Another direction is on specific graphical models, to improve the regimes for efficient dynamic sampling to the uniqueness regimes.
For example, for the hardcore model, such result would give an efficient algorithm for sampling from the hardcore model in the uniqueness regime, on all graphs including those with unbounded maximum degree, which remains to be open even for static and approximate sampling. So far we only have efficient static and approximate sampling algorithms for graphs with bounded maximum degree~\cite{weitz2006counting} or graphs with large girth and sufficiently large degree~\cite{efthymiou2016convergence}.

Along this direction, a very interesting open problem is to give dynamic samplers from Gibbs distributions with mild decay of correlation. One major open problem is dynamically sampling uniform matchings, which has a decay of correlation with rate $1-O(1/\sqrt{\Delta})$~\cite{bayati2007simple}. A fast dynamic sampler for matchings with, say $O(\Delta^{1.5})$ incremental cost per each update of an edge, even being used as a static and approximate sampler, would improve the $\tilde{O}(n^2m)$ time bound of the Jerrum-Sinclair chain for matchings~\cite{jerrum2003counting}.

More broadly, our techniques should be of independent interest and, in particular, should be useful to extend our results to sampling from joint distributions over continuous distributions and/or with global constraints.


\begin{thebibliography}{10}

\bibitem{bayati2007simple}
Mohsen Bayati, David Gamarnik, Dimitriy Katz, Chandra Nair, and Prasad Tetali.
\newblock Simple deterministic approximation algorithms for counting matchings.
\newblock In {\em Proceedings of the 39th Annual ACM Symposium on Theory of
  Computing (STOC)}, pages 122--127, 2007.

\bibitem{bishop2006PRML}
Christopher~M. Bishop.
\newblock {\em Pattern Recognition and Machine Learning}.
\newblock Springer, 2006.

\bibitem{dembo2010ising}
Amir Dembo and Andrea Montanari.
\newblock Ising models on locally tree-like graphs.
\newblock {\em The Annals of Applied Probability}, 20(2):565--592, 2010.

\bibitem{demetrescu2009dynamic}
Camil Demetrescu, David Eppstein, Zvi Galil, and Giuseppe~F Italiano.
\newblock Dynamic graph algorithms.
\newblock In {\em Algorithms and Theory of Computation Handbook, Second
  Edition, Volume 1}, pages 228--255. Chapman and Hall/CRC, 2009.

\bibitem{dyer2000markov}
Martin Dyer and Catherine Greenhill.
\newblock On markov chains for independent sets.
\newblock {\em Journal of Algorithms}, 35(1):17--49, 2000.

\bibitem{efthymiou2016convergence}
Charilaos Efthymiou, Thomas~P Hayes, Daniel {\v{S}}tefankovic, Eric Vigoda, and
  Yitong Yin.
\newblock Convergence of {MCMC} and loopy {BP} in the tree uniqueness region
  for the hard-core model.
\newblock In {\em Proceedings of the 57th Annual IEEE Symposium on Foundations
  of Computer Science (FOCS)}, pages 704--713, 2016.

\bibitem{feng2017sampling}
Weiming Feng, Yuxin Sun, and Yitong Yin.
\newblock What can be sampled locally?
\newblock In {\em Proceedings of the 36th {ACM} Symposium on Principles of
  Distributed Computing ({PODC})}, pages 121--130, 2017.

\bibitem{fill1997interruptible}
James~Allen Fill.
\newblock An interruptible algorithm for perfect sampling via {Markov} chains.
\newblock In {\em Proceedings of the 29th Annual ACM Symposium on Theory of
  Computing (STOC)}, pages 688--695, 1997.

\bibitem{fill2000randomness}
James~Allen Fill and Mark Huber.
\newblock The randomness recycler: a new technique for perfect sampling.
\newblock In {\em Proceedings of the 41st Annual IEEE Symposium on Foundations
  of Computer Science (FOCS)}, pages 503--511, 2000.

\bibitem{fill2000extension}
James~Allen Fill, Motoya Machida, Duncan~J Murdoch, and Jeffrey~S Rosenthal.
\newblock Extension of {Fill's} perfect rejection sampling algorithm to general
  chains.
\newblock {\em Random Structures \& Algorithms}, 17(3-4):290--316, 2000.

\bibitem{fischer2018simple}
Manuela Fischer and Mohsen Ghaffari.
\newblock A simple parallel and distributed sampling technique: Local glauber
  dynamics.
\newblock In {\em International Symposium on Distributed Computing (DISC)},
  2018.

\bibitem{galanis2015inapproximability}
Andreas Galanis, Daniel {\v{S}}tefankovi{\v{c}}, and Eric Vigoda.
\newblock Inapproximability for antiferromagnetic spin systems in the tree
  nonuniqueness region.
\newblock {\em Journal of the ACM (JACM)}, 62(6):50, 2015.

\bibitem{galanis2016inapproximability}
Andreas Galanis, Daniel {\v{S}}tefankovi{\v{c}}, and Eric Vigoda.
\newblock Inapproximability of the partition function for the antiferromagnetic
  {Ising} and hard-core models.
\newblock {\em Combinatorics, Probability and Computing}, 25(04):500--559,
  2016.

\bibitem{gerschcnfeld7reconstruction}
A~Gerschcnfeld and A~Monianari.
\newblock Reconstruction for models on random graphs.
\newblock In {\em Proceedings of the 48th Annual IEEE Symposium on Foundations
  of Computer Science (FOCS)}, 2007.

\bibitem{goldberg2005strong}
L.A. Goldberg, R.~Martin, and M.~Paterson.
\newblock Strong spatial mixing with fewer colors for lattice graphs.
\newblock {\em SIAM Journal on Computing}, 35(2):486, 2005.

\bibitem{guo2018tight}
Heng Guo and Kun He.
\newblock Tight bounds for popping algorithms.
\newblock {\em arXiv preprint arXiv:1807.01680}, 2018.

\bibitem{guo2018simple}
Heng Guo and Mark Jerrum.
\newblock A polynomial-time approximation algorithm for all-terminal network
  reliability.
\newblock In {\em Proceedings of the 45th International Colloquium on Automata,
  Languages, and Programming (ICALP)}, 2018.

\bibitem{guo2018random}
Heng Guo and Mark Jerrum.
\newblock Random cluster dynamics for the {Ising} model is rapidly mixing.
\newblock {\em The Annals of Applied Probability}, 28(2):1292--1313, 2018.

\bibitem{guo2016uniform}
Heng Guo, Mark Jerrum, and Jingcheng Liu.
\newblock Uniform sampling through the {L}ov{\'a}sz local lemma.
\newblock In {\em Proceedings of the 49th Annual ACM Symposium on Theory of
  Computing (STOC)}, pages 342--355, 2017.

\bibitem{haeupler2011new}
Bernhard Haeupler, Barna Saha, and Aravind Srinivasan.
\newblock New constructive aspects of the {L}ov{\'a}sz local lemma.
\newblock {\em Journal of the ACM (JACM)}, 58(6):28, 2011.

\bibitem{harris2016new}
David~G Harris.
\newblock New bounds for the {Moser-Tardos} distribution.
\newblock {\em arXiv preprint arXiv:1610.09653}, 2016.

\bibitem{Harris2013The}
David~G. Harris and Aravind Srinivasan.
\newblock The {Moser-Tardos} framework with partial resampling.
\newblock In {\em Proceedings of the 54th Annual IEEE Symposium on Foundations
  of Computer Science (FOCS)}, pages 469--478, 2013.

\bibitem{harvey2018computing}
Nicholas~JA Harvey, Piyush Srivastava, and Jan Vondr{\'a}k.
\newblock Computing the independence polynomial: from the tree threshold down
  to the roots.
\newblock In {\em Proceedings of the 29th Annual ACM-SIAM Symposium on Discrete
  Algorithms (SODA)}, pages 1557--1576. SIAM, 2018.

\bibitem{harvey2015algorithmic}
Nicholas~J.A. Harvey and Jan Vondr{\'a}k.
\newblock An algorithmic proof of the {L}ov{\'a}sz local lemma via resampling
  oracles.
\newblock In {\em Proceedings of the 56th Annual IEEE Symposium on Foundations
  of Computer Science (FOCS)}, pages 1327--1346, 2015.

\bibitem{hayes2007general}
Thomas~P Hayes and Alistair Sinclair.
\newblock A general lower bound for mixing of single-site dynamics on graphs.
\newblock {\em The Annals of Applied Probability}, pages 931--952, 2007.

\bibitem{huber2004perfect}
Mark Huber.
\newblock Perfect sampling using bounding chains.
\newblock {\em The Annals of Applied Probability}, 14(2):734--753, 2004.

\bibitem{huber2016perfect}
Mark~L Huber.
\newblock {\em Perfect simulation}.
\newblock Chapman and Hall/CRC, 2016.

\bibitem{jerrum1995very}
Mark Jerrum.
\newblock A very simple algorithm for estimating the number of $k$-colorings of
  a low-degree graph.
\newblock {\em Random Structures \& Algorithms}, 7(2):157--165, 1995.

\bibitem{jerrum2003counting}
Mark Jerrum.
\newblock {\em Counting, sampling and integrating: algorithms and complexity}.
\newblock Springer Science \& Business Media, 2003.

\bibitem{jerrum1993polynomial}
Mark Jerrum and Alistair Sinclair.
\newblock Polynomial-time approximation algorithms for the {I}sing model.
\newblock {\em SIAM Journal on Computing}, 22(5):1087--1116, 1993.

\bibitem{jerrum2004polynomial}
Mark Jerrum, Alistair Sinclair, and Eric Vigoda.
\newblock A polynomial-time approximation algorithm for the permanent of a
  matrix with nonnegative entries.
\newblock {\em Journal of the ACM (JACM)}, 51(4):671--697, 2004.

\bibitem{JVV86}
Mark Jerrum, Leslie~G. Valiant, and Vijay~V. Vazirani.
\newblock Random generation of combinatorial structures from a uniform
  distribution.
\newblock {\em Theoretical Computer Science}, 43:169--188, 1986.

\bibitem{jordan1998learning}
Michael~Irwin Jordan.
\newblock {\em Learning in graphical models}, volume~89.
\newblock Springer Science \& Business Media, 1998.

\bibitem{Knuth:1997:ACP:270146}
Donald~E. Knuth.
\newblock {\em The Art of Computer Programming, Volume 2 (3rd Ed.):
  Seminumerical Algorithms}.
\newblock Addison-Wesley Longman Publishing Co., Inc., Boston, MA, USA, 1997.

\bibitem{koller2009probabilistic}
Daphne Koller and Nir Friedman.
\newblock {\em Probabilistic graphical models: principles and techniques}.
\newblock MIT press, 2009.

\bibitem{levin2017markov}
David~A Levin and Yuval Peres.
\newblock {\em Markov chains and mixing times}, volume 107.
\newblock American Mathematical Soc., 2017.

\bibitem{mezard2009information}
Marc Mezard and Andrea Montanari.
\newblock {\em Information, physics, and computation}.
\newblock Oxford University Press, 2009.

\bibitem{moitra2017approximate}
Ankur Moitra.
\newblock Approximate counting, the lovasz local lemma, and inference in
  graphical models.
\newblock In {\em Proceedings of the 49th Annual ACM Symposium on Theory of
  Computing (STOC)}, pages 356--369. ACM, 2017.

\bibitem{moser2010constructive}
Robin~A Moser and G{\'a}bor Tardos.
\newblock A constructive proof of the general {L}ov{\'a}sz local lemma.
\newblock {\em Journal of the ACM (JACM)}, 57(2):11, 2010.

\bibitem{mossel2013exact}
Elchanan Mossel and Allan Sly.
\newblock Exact thresholds for {Ising}--{Gibbs} samplers on general graphs.
\newblock {\em The Annals of Probability}, 41(1):294--328, 2013.

\bibitem{propp1996exact}
James~Gary Propp and David~Bruce Wilson.
\newblock Exact sampling with coupled {Markov} chains and applications to
  statistical mechanics.
\newblock {\em Random Structures \& Algorithms}, 9(1-2):223--252, 1996.

\bibitem{salas1997absence}
Jes{\'u}s Salas and Alan~D Sokal.
\newblock Absence of phase transition for antiferromagnetic {Potts} models via
  the {Dobrushin} uniqueness theorem.
\newblock {\em Journal of Statistical Physics}, 86(3):551--579, 1997.

\bibitem{straszak2017real}
Damian Straszak and Nisheeth~K Vishnoi.
\newblock Real stable polynomials and matroids: Optimization and counting.
\newblock In {\em Proceedings of the 49th Annual ACM Symposium on Theory of
  Computing (STOC)}, pages 370--383, 2017.

\bibitem{vigoda1999fast}
Eric Vigoda.
\newblock Fast convergence of the glauber dynamics for sampling independent
  sets: Part {II}.
\newblock Technical Report TR-99-003, International Computer Science Institute,
  1999.

\bibitem{vigoda2000improved}
Eric Vigoda.
\newblock Improved bounds for sampling colorings.
\newblock {\em Journal of Mathematical Physics}, 41(3):1555--1569, 2000.

\bibitem{wainwright2008graphical}
Martin~J. Wainwright and Michael~I. Jordan.
\newblock Graphical models, exponential families, and variational inference.
\newblock {\em Foundations and Trends in Machine Learning}, 1(1--2):1--305,
  2008.

\bibitem{weitz2006counting}
Dror Weitz.
\newblock Counting independent sets up to the tree threshold.
\newblock In {\em Proceedings of the 38th Annual ACM Symposium on Theory of
  Computing (STOC)}, pages 140--149, 2006.

\bibitem{wilson1996generating}
David~Bruce Wilson.
\newblock Generating random spanning trees more quickly than the cover time.
\newblock In {\em Proceedings of the 28th Annual ACM Symposium on Theory of
  Computing (STOC)}, pages 296--303, 1996.

\bibitem{yedidia2005constructing}
Jonathan~S Yedidia, William~T Freeman, and Yair Weiss.
\newblock Constructing free-energy approximations and generalized belief
  propagation algorithms.
\newblock {\em IEEE Transactions on Information Theory}, 51(7):2282--2312,
  2005.

\end{thebibliography}
\end{document}